  \theoremstyle{plain}
  \newtheorem{theorem}{Theorem}
  \newtheorem{lemma}[theorem]{Lemma}
  \theoremstyle{definition}
  \newtheorem{definition}[theorem]{Definition}
\newcommand{\abs}[1]{\left\lvert#1\right\rvert}
\newcommand{\norm}[1]{\left\lVert#1\right\rVert}
\begin{document}

\title[Quantum signal processing without angle finding]
{Quantum signal processing without angle finding}
\author{Abhijeet Alase}
\affiliation{Centre for Engineered Quantum Systems, School of Physics,
The University of Sydney, Sydney, New South Wales 2006, Australia}
\email{alase.abhijeet@gmail.com}

\date{\today}

\begin{abstract}
Quantum signal processing (QSP) has emerged as a unifying subroutine in quantum algorithms. 
In QSP, we are given a function $f$ and a unitary black-box $U$,
and the goal is to construct a quantum circuit for 
implementing $f(U)$ to a given precision. 
The existing approaches to performing QSP require a classical 
preprocessing step to compute rotation angle parameters for quantum circuits 
that implement $f$ approximately. However, this classical computation 
often becomes a bottleneck, limiting the scalability and practicality of QSP. 
In this work, we propose a novel approach to QSP that bypasses the 
computationally intensive angle-finding step. Our method leverages 
a quantum circuit for implementing a diagonal operator that encodes $f$, 
which can be constructed from a classical circuit 
for evaluating $f$. This approach to QSP simplifies the circuit 
design significantly while enabling nearly optimal implementation of 
functions of block-encoded Hermitian matrices for black-box functions.
Our circuit closely resembles the phase estimation-based circuit for 
function implementation, challenging conventional skepticism about 
its efficiency. By reducing classical overhead, our work significantly 
broadens the applicability of QSP in quantum computing.


\end{abstract}

\maketitle

\section{Introduction}
Quantum signal processing (QSP) subroutine~\cite{LYC16,LC17,LC19} underpins 
the state-of-the-art algorithms for performing many quantum computational tasks,
including Hamiltonian simulation, solving a system of linear equations, 
solving differential equations and optimization~\cite{MRTC21}. 
In the block encoding framework~\cite{LC19,GSLW19}, QSP leads to an efficient
strategy for implementing functions of Hermitian matrices and for
performing quantum singular value transformation (QSVT). 
QSVT powered by QSP has 
proved crucial in developing quantum algorithms with optimal or 
exponentially improved dependence on precision for the aforementioned tasks. 
In QSP, we are given a function $f$ and
a black-box unitary $U$, and the goal is to construct a quantum circuit 
to implement $f(U)$ up to
a given precision $\epsilon$. The quantum circuits proposed in 
the literature~\cite{LC17,GSLW19,MW24}
for QSP are extremely efficient in their use of the quantum resources, namely
queries to $U$, $U^\dagger$ and additional one- and two-qubit gates. 
However, in these approaches, one first needs
to classically calculate certain rotation angles that parametrize 
the quantum circuit for implementing a Laurent polynomial 
approximation to $f$~\cite{Haa19,CDGH+20}. 
This classical step has been identified as a bottleneck in the existing approaches to QSP,
and may limit the utility of QSP in practice~\cite{Haa19,CDGH+20,DMWL21,YY24,DLNW24}.
In this paper, we construct a circuit for QSP that 
bypasses the polynomial approximation and angle-finding steps,
thereby significantly simplifying the design of the circuit
and reducing the classical cost of QSP.

Let us begin by reviewing one of the existing approaches for QSP~\cite{MW24,BMPW24}. The function
$f: S^1 \to \mathbb{C}$ on the unit circle is either fixed, or it is assumed to be
provided via coefficients of some series expansion, such as Taylor series
or Chebyshev series. Using these coefficients, one classically computes
a Laurent polynomial $f_d \in \mathbb{C}_d[z,z^{-1}]$ of 
degree $d$ approximating $f$ on the unit circle. 
The degree $d$ is chosen such
that the approximation error 
\begin{equation}
\norm{f(z)-f_d(z)}_\infty := \max_{z \in S^1}\abs{f(z)-f_d(z)}
\end{equation}
is within the desired precision $\epsilon$. In the
next step, an alternating sequence of the directionally controlled operation 
$W = \ket{0}\bra{0}\otimes U+ \ket{1}\bra{1}\otimes U^\dagger$ 
and single-qubit rotations on the first (control) qubit is employed to implement $f_d(U)$. 
The angles specifying single-qubit rotations around the $x$- and the $z$-axes 
are the parameters of this quantum circuit. Assuming $\norm{f_d}_\infty \le 1$,
this can be achieved with $d$ uses of $W$ and $d$ composite 
single-qubit rotations.

Calculation of the single-qubit rotation angles parametrizing the QSP circuit
is known to be challenging and has attracted 
significant attention over the last few years~\cite{GSLW19,Haa19,CDGH+20,DMWL21,YY24,DLNW24}.
While these angles can be computed classically by some deterministic 
algorithms, their complexity scales unfavourably as ${\bf O}(d^2)$ or 
worse~\cite{GSLW19,DMWL21}. Apart from
their high cost, these deterministic algorithms are 
known to be numerically unstable~\cite{DMWL21}.
Alternatively, optimization algorithms can be used to compute these angles
approximately~\cite{CDGH+20,DMWL21,YY24,DLNW24}, 
but a provable bound on the asymptotic complexity of such 
algorithms is not known to our knowledge. 

Apart from angle-finding, the classical cost of computing a Laurent polynomial
approximation $f_d$ itself can be comparable or higher than the cost of computing 
the rotation angles~\cite{GL17,GSLW19,AGGW20}. For achieving the best query complexity, 
one aims to obtain the best Laurent polynomial approximation $f_{d,*}(x)$ to $f(x)$. While
some iterative algorithms are known for computing the
best Laurent polynomial approximation~\cite{Fra65}, the convergence rate of such algorithms for
general $f$ is not known, thereby obstructing a rigorous upper bound
on the classical cost of computing such an approximation.
Another approach is to use truncated or interpolated Fourier series to obtain
a Laurent polynomial approximation, which could result in a suboptimal dependence 
of the query complexity on precision in general~\cite{Wan21}.

In this paper, we use ideas from the theory of interpolation~\cite{Ste05} to construct
a new circuit for QSP. The design of this circuit is extremely simple
and bypasses both the classical steps required 
by the existing approaches to QSP,
namely finding a Laurent polynomial approximation as well as
finding rotation angles to implement that polynomial. For this purpose, 
we assume access to a diagonal unitary encoding of $f$~\cite{ZNSD24}. 
A similar encoding was used for implementing matrix inverse
in the seminal HHL algorithm~\cite{HHL09}.
Such an encoding can always be constructed from a classical
circuit for computing $f$ (see Sec.~\ref{sec:qsp}
for further discussion).
Several methods for constructing such an encoding are known~\cite{ZNSD24}.
Our circuit for QSP, which uses this encoding of $f$, 
can be constructed exponentially faster on a classical computer,
and has the same query complexity as the existing approaches up to
a constant prefactor. Interestingly, our circuit makes only one use of
the operator encoding $f$. 
We also show the application of our circuit for QSP to 
implement functions of Hermitian matrices and implementing 
quantum singular value transformation (QSVT)~\cite{GSLW19}.

There are two main implications of our work. First, our circuit for
QSP, and therefore for implementing functions of Hermitian matrices,
can be constructed efficiently and stably on a classical computer
for any efficiently computable $f$. This overcomes the key limitation of
the existing approaches to QSP, without compromising the scaling
of the query complexity. Second, our circuit for QSP is remarkably
similar to the phase estimation-based circuit, for instance the one used
in HHL algorithm for matrix inversion. Prior to our work, this approach was considered expensive
 due to unfavourable complexity of the phase estimation subroutine.
Our work, in effect, provides a different way of analyzing errors in the phase estimation-based
approach for implementing functions of matrices. Thus, it is seen that the accuracy of the implemented
function is often, to our surprise, significantly better than the accuracy to which the
eigenphases are estimated in this approach.

The organization of the rest of the paper is as follows. In Sec.~\ref{sec:results},
we provide an overview of the main results. In Sec.~\ref{sec:qsp}, we prove the correctness
and complexity of our circuit for QSP. In Sec.~\ref{sec:qevt}, we discuss the application
of this circuit for performing functions of block-encoded Hermitian matrices and
singular value transformation of arbitrary block-encoded matrices. 
Finally, in Sec.~\ref{sec:conclusion}, we summarize the results and discuss open problems.

\section{Results and discussion}
\label{sec:results}
To state the main results of the paper, we need to introduce three 
definitions. First, let us define the diagonal encoding of 
a given function $f$.
\begin{definition}[Diagonal encoding of a function]
Let $f:S^1 \to \mathbb{C}$ be such that $\norm{f}_{\infty} \le 1$. 
For $d=2^m$ with $m \in \mathbb{Z}^+$, we say that
an $(m+3)$-qubit unitary operator $U_{f,4d}$ encodes $f$ if
\begin{equation}
\label{eq:funitaryencoding}
    \bra{j'}\bra{0}U_{f,4d}\ket{j}\ket{0} = \delta_{jj'}f\left(e^{i 2\pi j/4d}\right),\quad 
    j,j' \in [4d],
\end{equation}
where $[4d] := \{0,1,\dots,4d-1\}$. 
\end{definition}

Next, we define block encoding of a square matrix.
\begin{definition}[Block encoding~\cite{GSLW19}]
For a $2^n \times 2^n$ matrix $A$ and $\alpha_A,\epsilon_A \in \mathbb{R}^+$,
we say that 
an $(n+a_A)$-qubit unitary operator $U_A$ is a 
$(\alpha_A,a_A,\epsilon_A)$-block encoding of $A$ if 
\begin{equation}
\abs{A - \alpha_A\bra{0^a}U_A\ket{0^a}} \le \epsilon_A.
\end{equation}
\end{definition}

Finally, we define the best Laurent polynomial approximation to a function.
\begin{definition}[Best Laurent polynomial approximation~\cite{Ste05}]
For a function $f:S^1 \to \mathbb{C}$ on the unit circle, 
we say that $f_{d,*}\in\mathbb{C}_d[z,z^{-1}]$ 
is the best degree-$d$ Laurent polynomial approximation 
to $f$ if 
\begin{equation}
	\norm{f-f_{d,*}}_{\infty} = \min_{g \in \mathbb{C}_d[z,z^{-1}]} \norm{f-g}_{\infty}.
\end{equation}
The best approximation error is defined to be
\begin{equation}
    E_d(f):= \norm{f-f_{d,*}}_{\infty}.
\end{equation}
\end{definition}
Note that $E_d(f)$ can also be interpreted as the error
in the best trigonometric polynomial approximation to 
the function $\tilde{f}:\mathbb{R} \to \mathbb{C}$ 
defined by $\tilde{f}(\theta) = f(e^{i\theta})$. This follows 
from the fact that $\tilde{f}_{d,*}:\mathbb{R} \to \mathbb{C}$
defined by $\tilde{f}_{d,*}(\theta) = f_{d,*}(e^{i\theta})$ is a degree-$d$ 
trigonometric polynomial in $\theta$, and therefore it is the best 
trigonometric polynomial approximation to $\tilde{f}(\theta)$. 

We are now ready to state the main result of this paper.
\begin{theorem}[QSP]
\label{thm:newQSP}
Given an $n$-qubit black-box unitary $U$ 
and a unitary $U_{f,4d}$ for $d = 2^m$, $m \in \mathbb{Z}^+$ encoding 
a function $f:S^1\to\mathbb{C}$ with $\norm{f}_\infty \le 1$, 
we can construct a $(\sqrt{2},m+3,(1+\sqrt{2})E_d(f))$-block 
encoding of $f(U)$ by making $4d-1$ uses
of c-$U$ and c-$U^\dagger$ each, one use 
of $U_{f,4d}$, and ${\bf O}(m^2)$ additional single- and two-qubit gates. 
Moreover, the circuit description for this block encoding can
be computed in $\text{poly}(m)$ time on a classical computer.
\end{theorem}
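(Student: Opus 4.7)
I would build the circuit from a handful of standard ingredients---Hadamards, QFTs on the $m$-qubit ancilla, binary-decomposed controlled powers of $U$ and $U^\dagger$, and a single application of $U_{f,4d}$---arranged so that the overall resource count matches the theorem: $4d-1$ queries each to c-$U$ and c-$U^\dagger$, one query to $U_{f,4d}$, and ${\bf O}(m^2)$ extra gates dominated by the two QFTs. The natural template is a ``phase-estimation sandwich'': one half uses Hadamards, c-$U^k$, and an inverse QFT to place the $m$-qubit register in an explicit $\phi$-dependent superposition of the form $\sum_j c_j(\phi)\ket{j}$ on eigenvectors $\ket{\psi_\phi}$ of $U$; the single $U_{f,4d}$ then multiplies each $\ket{j}$ by $f(e^{i2\pi j/4d})$ inside the $\ket{0^3}$ subspace of the 3-qubit ancilla; the second half uses a QFT, c-$U^{-k}$, and Hadamards to gather the weighted amplitudes back onto $\ket{0^{m+3}}$. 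The specific variant of this template has to be chosen so that the resulting amplitude, up to the subnormalization $\sqrt{2}$, is a genuine \emph{interpolant} of $f$---in particular one that reproduces degree-$d$ Laurent polynomials---rather than a Fej\'er-type smoothing.

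Given the circuit, I would diagonalize $U=\sum_\phi e^{i\phi}\ket{\psi_\phi}\bra{\psi_\phi}$ and compute the amplitude of $\ket{0^{m+3}}\ket{\psi_\phi}$ at the output as a function of $\phi$. Linearity of the circuit in $f$ makes this amplitude depend linearly on the sample values, so it defines a function $q_f(e^{i\phi})$ on $S^1$, and the block-encoded operator equals $q_f(U)$. The theorem then reduces to the scalar estimate
\begin{equation}
\norm{\sqrt{2}\,q_f-f}_\infty \le (1+\sqrt{2})\,E_d(f).
\end{equation}
I would prove this in two steps. First, verify that the linear operator $T:f\mapsto\sqrt{2}\,q_f$ fixes every $g\in\mathbb{C}_d[z,z^{-1}]$; this should reduce to a direct DFT-aliasing calculation exploiting the fact that the $4d$ equispaced samples overdetermine the $(2d+1)$-dimensional space of degree-$d$ Laurent polynomials by a factor of two. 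Given $Tf_{d,*}=f_{d,*}$, the triangle inequality yields the Lebesgue-type bound $\norm{Tf-f}_\infty\le(1+\norm{T}_{\mathrm{op}})\,E_d(f)$. Second, I would bound $\norm{T}_{\mathrm{op}}\le\sqrt{2}$ by realizing $T$ as a discrete convolution against an explicit kernel on the $4d$-point grid and estimating the worst-case $L^1$-mass of that kernel via closed-form trigonometric identities.

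The kernel estimate behind $\norm{T}_{\mathrm{op}}\le\sqrt{2}$ is the main obstacle and is where the factor-of-two oversampling must be used quantitatively: a naive Dirichlet bound would yield only a $\log d$-type constant, so the sum of kernel magnitudes on the $4d$ grid has to be controlled by a delicate cancellation argument. To finish, the query counts and the ${\bf O}(m^2)$ gate count are read off directly from the circuit description, and because no function-dependent angles appear anywhere in the template, writing down the circuit is a $\text{poly}(m)$ classical task, completing the claim.
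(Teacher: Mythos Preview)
Your overall architecture matches the paper's proof: a phase-estimation sandwich that defines a linear map $T:f\mapsto\sqrt{2}\,q_f$, the verification that $T$ fixes $\mathbb{C}_d[z,z^{-1}]$ via a DFT/aliasing identity, and the Lebesgue-type bound $\norm{Tf-f}_\infty\le(1+\norm{T}_{\mathrm{op}})E_d(f)$. The resource-count discussion is also in line with the paper (one minor slip: the QFT register has $m+2$ qubits, since $4d=2^{m+2}$; the third ancilla is the block-encoding qubit of $U_{f,4d}$).

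Where you diverge is in the step you flag as the main obstacle, and here you are making your life much harder than necessary. You propose to bound $\norm{T}_{\mathrm{op}}\le\sqrt{2}$ by writing $T$ as a discrete convolution and controlling the $L^1$-mass of the kernel, worrying that a Dirichlet-type estimate gives only $\log d$. The paper avoids this entirely: once you have written
\[
(Tf)(z)=\sqrt{2}\,\bra{+_{2d}}\mathcal{W}_z^\dagger\, f(V_{4d})\,\mathcal{W}_z\ket{+_{4d}},
\]
the bound $\norm{T}_{\mathrm{op}}\le\sqrt{2}$ is a one-liner. The vectors $\ket{+_{2d}}$, $\ket{+_{4d}}$ are unit vectors, $\mathcal{W}_z$ is unitary for $|z|=1$, and $f(V_{4d})$ is unitarily diagonal with eigenvalues $\{f(z_k)\}$, so $\norm{f(V_{4d})}=\max_k|f(z_k)|\le\norm{f}_\infty$; hence $|(Tf)(z)|\le\sqrt{2}\norm{f}_\infty$ for every $z\in S^1$. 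No kernel analysis, no cancellation argument, no oversampling subtlety is needed---the unitary structure of the very circuit you built already encodes the Lebesgue constant. This is the one idea your proposal is missing, and it turns your ``main obstacle'' into a triviality.
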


The proof of Theorem~\ref{thm:newQSP} is included in Sec.~\ref{sec:qsp}.
Although $d$ is just a parameter in Theroem~\ref{thm:newQSP}, 
it is convenient to think of $d$ as the degree of the approximating polynomial
for the purpose of comparison to the conventional QSP circuit.
Notice that the classical cost in Theorem~\ref{thm:newQSP}
scales as ${\bf O}(m^2) \in {\bf O}(\log^2{d})$. 
If the classical cost of constructing a circuit for $U_{f,4d}$ 
is in $\text{polylog}(d)$, then the overall classical cost of
our algorithm is also $\text{polylog}(d)$. In this case, our circuit
provides an exponential improvement in the classical cost
over the circuits proposed for QSP in the literature, which require
${\bf O}(d^3)$ classical operations for angle-finding alone~\cite{DLNW24}, and 
more for computing a degree-$d$ polynomial approximation for $f$~\cite{AGGW20}. 
More importantly, our circuit for QSP is extremely simple and
is uniquely determined by the parameter $d$,
and therefore overcomes the issue of numerical instabilities encountered by the
conventional approaches to QSP~\cite{DLNW24}. 

If one uses the best Laurent polynomial approximation of degree $d$, 
which is demanding to compute, then the error for conventional QSP scales 
as $E_d(f)$. In comparison, the error for our 
circuit for QSP scales as $(1+\sqrt{2})E_d(f)$, which is worse only by
a constant factor. However, 
the conventional approach to QSP makes many 
queries to the (classical) function oracle for the purpose of computing the best
Laurent polynomial approximation, and the queries grow with the required precision. 
Our circuit makes only one query to $U_{f,4d}$, which in turn can be constructed
using two queries to the binary quantum oracle for $f$ 
as we discuss in Sec.~\ref{sec:qsp}. 
These desirable features of our new circuit for QSP come at the cost of 
only a constant overhead for queries to $U$ compared to the 
conventional QSP circuit (${\bf O}(d)$ instead of $d$). Therefore, we  
achieve a circuit for QSP that makes nearly optimal uses of $U$, $U_{f,4d}$ and
additional two-qubit gates. 

In Sec.~\ref{sec:qevt}, we study the application of our new circuit for QSP
for implementing functions of Hermitian matrices (FHM)~\cite{GSLW19,CGJ19,SSJ19,AGGW20}. 
Our main result is the following theorem.
\begin{theorem}[FHM]
\label{thm:newFHM}
Let $g:[-1,1]\to\mathbb{C}$ with $\abs{g(x)} \le 1$
and let $f:S^1\to\mathbb{C}$ be defined by
$f(e^{i\theta}) = g(\cos{\theta})$ for $\theta \in \mathbb{R}$.
Given an $(n+a)$-qubit block encoding $U_H$ for an $n$-qubit
Hamiltonian $H$ and a unitary $U_{f,4d}$ 
for $d = 2^m$, $m \in \mathbb{Z}^+$, 
we can construct a 
$(\sqrt{2},a+m+3,(1+\sqrt{2})E_d(f))$-block 
encoding $U_{g(H)}$ of $g(H)$ by making $4d-1$ uses
of c-$U_H$ and c-$U_H^\dagger$ each, one use of $U_{f,4d}$, and 
${\bf O}(da)$ additional two-qubit gates. 
Moreover, a circuit description of $U_{g(H)}$ can
be computed using $\text{poly}(m)$ classical operations.
\end{theorem}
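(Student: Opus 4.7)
The plan is to reduce Theorem~\ref{thm:newFHM} to Theorem~\ref{thm:newQSP} via the qubitization construction of Low and Chuang. First, I would build the qubitized walk operator $W = R \cdot U_H$ on $n+a$ qubits, where $R = (2\ket{0^a}\bra{0^a} - I)\otimes I_n$ is the reflection about the good ancilla subspace. A standard calculation shows that for each eigenvector $\ket{\psi_k}$ of $H$ with eigenvalue $\lambda_k \in [-1,1]$, the state $\ket{0^a}\ket{\psi_k}$ lies in a two-dimensional $W$-invariant subspace on which $W$ has eigenvalues $e^{\pm i\arccos \lambda_k}$ on the unit circle.

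Next, I would apply Theorem~\ref{thm:newQSP} to the black-box unitary $U := W$ with the encoded function $f$. This yields a $(\sqrt{2}, m+3, (1+\sqrt{2})E_d(f))$-block encoding of $f(W)$ using $4d-1$ queries each to c-$W$ and c-$W^\dagger$, one use of $U_{f,4d}$, and ${\bf O}(m^2)$ additional single- and two-qubit gates. Since c-$W$ decomposes into one c-$U_H$ together with a controlled reflection on the $a$-qubit ancilla, and c-$W^\dagger$ decomposes analogously using c-$U_H^\dagger$, the total query count is $4d-1$ to c-$U_H$ and to c-$U_H^\dagger$ each, while the reflections contribute ${\bf O}(da)$ additional two-qubit gates that dominate the ${\bf O}(m^2) = {\bf O}(\log^2 d)$ term from Theorem~\ref{thm:newQSP}.

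The crucial step is to verify that $f(W)$, post-selected on $\ket{0^a}$ in the inner ancilla register, implements $g(H)$. On the invariant subspace tied to $\lambda_k$, the eigenvalues of $f(W)$ are $f(e^{\pm i\arccos \lambda_k}) = g(\cos(\pm\arccos \lambda_k)) = g(\lambda_k)$, and since both coincide, $f(W)$ acts as the scalar $g(\lambda_k)$ on that subspace. Hence $f(W)$ leaves the good subspace $\mathrm{span}\{\ket{0^a}\ket{\psi_k}\}_k$ invariant and restricts there to $g(H)$. Concatenating the $m+3$ QSP ancillas with the $a$ ancillas of $U_H$ gives the claimed $(\sqrt{2}, a+m+3, (1+\sqrt{2})E_d(f))$-block encoding of $g(H)$.

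The main obstacle I anticipate is the error propagation: Theorem~\ref{thm:newQSP} controls the operator-norm distance between $f(W)$ and the unitary block-encoded by the output circuit, and I must check that composing this with the additional projection onto $\ket{0^a}$ on the $U_H$-ancilla does not inflate the $(1+\sqrt{2})E_d(f)$ bound. Boundary cases $\lambda_k = \pm 1$, where the $W$-invariant subspace collapses to one dimension, require a separate but essentially trivial check using continuity of $f$. The classical construction cost is inherited from Theorem~\ref{thm:newQSP} plus the uniform description of the qubitization layer, and so remains $\text{poly}(m)$.
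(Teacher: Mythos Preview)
Your proposal is correct and follows essentially the same route as the paper. The paper's proof applies Theorem~\ref{thm:newQSP} to the unitary $(2\Pi-\mathds{1})U_H$ with $\Pi=\ket{0^a}\bra{0^a}$, which is precisely your qubitized walk $W=R\cdot U_H$, and invokes Lemma~\ref{lem:arccos} (the two-dimensional invariant-subspace computation you outline) to conclude $\braket{0^a|f((2\Pi-\mathds{1})U_H)|0^a}=g(H)$; the resource counting is identical.
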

Similar to QSP, the query complexity of FHM stated in Theorem~\ref{thm:newFHM} is equivalent to
the best results in the literature up to a constant prefactor, but the classical 
cost is exponentially smaller. 
Moreover, we discuss how Jackson's inequalities
can be used to bound the error $(1+\sqrt{2})E_d(f)$ in the resulting block encoding.
As an application, we show that for a given non-integer positive number $c$, 
a block encoding of $\abs{H}^c$ to precision $\epsilon$ can be constructed
using ${\bf O}(1/\epsilon^{1/c})$ queries to $U_H$ and $U_H^\dagger$. 
We also discuss how analytic and entire functions can be handled
within this approach. Finally, we discuss how Theorem~\ref{thm:newFHM}
can be used to perform quantum singular value transformation (QSVT),
achieving equivalent query complexity as the best results in the literature
but with a circuit that is easy to construct classically.

\section{Quantum signal processing by polynomial interpolation}
\label{sec:qsp}
In this section, we prove Theorem~\ref{thm:newQSP} by constructing a block encoding of
$f(U)$. Let us first address the special case when $f$ is a degree-$d$ Laurent polynomial. 
Our strategy is to make use of interpolation by Laurent polynomial.
We use of the following expression for $f(z)$.
\begin{lemma}
\label{lem:interpolation1}
Let $f \in \mathbb{C}[z,z^{-1}]$ be a degree-$d$ Laurent polynomial
and $d_1>2d$ an integer. Then
for $\{z_k = \exp\left(2\pi ik/d_1\right), \ k=0,\dots,d_1-1\}$, 
\begin{equation}
\label{eq:interpolation1}
    f(z) =z^{-d}\frac{1}{d_1}\sum_{j,k=0}^{d_1-1}f(z_k)z_k^d\left(\frac{z}{z_k}\right)^{j},
\end{equation}
\end{lemma}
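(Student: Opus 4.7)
The plan is to reduce the identity to a standard discrete-Fourier interpolation statement for ordinary polynomials, after clearing the negative powers of $z$ in $f$.

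First, I would write $f(z) = \sum_{\ell=-d}^{d} c_\ell z^\ell$ and observe that the map $f(z) \mapsto p(z) := z^d f(z) = \sum_{\ell=-d}^{d} c_\ell z^{\ell+d}$ produces an ordinary polynomial of degree at most $2d$ in $z$. The hypothesis $d_1 > 2d$ is exactly what is needed so that $p$ has degree strictly less than $d_1$, and is therefore uniquely determined by its values at any $d_1$ distinct points, in particular at the $d_1$-th roots of unity $\{z_k\}_{k=0}^{d_1-1}$.

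Next I would invoke the standard DFT-based interpolation formula: any polynomial $p(z) = \sum_{j=0}^{d_1-1} a_j z^j$ of degree at most $d_1-1$ satisfies
\begin{equation}
a_j = \frac{1}{d_1}\sum_{k=0}^{d_1-1} p(z_k)\, z_k^{-j},
\end{equation}
which follows from the orthogonality relation $\sum_{k=0}^{d_1-1} z_k^{j-j'} = d_1\,\delta_{jj'}$ for $0 \le j,j' \le d_1-1$. Substituting back gives
\begin{equation}
p(z) = \frac{1}{d_1}\sum_{j,k=0}^{d_1-1} p(z_k)\left(\frac{z}{z_k}\right)^{j}.
\end{equation}

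Finally I would specialise to $p(z) = z^d f(z)$, so that $p(z_k) = z_k^d f(z_k)$, and divide both sides by $z^d$ to recover
\begin{equation}
f(z) = z^{-d}\frac{1}{d_1}\sum_{j,k=0}^{d_1-1} f(z_k)\, z_k^d \left(\frac{z}{z_k}\right)^{j},
\end{equation}
which is exactly \eqref{eq:interpolation1}. There is no real obstacle here; the only point that deserves verification is the degree count $\deg p \le 2d < d_1$, which legitimises the use of interpolation at $d_1$ nodes, and the rest is bookkeeping with the orthogonality of characters of $\mathbb{Z}/d_1\mathbb{Z}$.
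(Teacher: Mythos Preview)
Your proof is correct and is essentially the same as the paper's: both arguments write out the Laurent expansion, recover the coefficients via the discrete Fourier orthogonality relation $\sum_{k} z_k^{j-j'} = d_1\delta_{jj'}$ (valid because $d_1>2d$), and substitute back. The only cosmetic difference is that you first pass to the ordinary polynomial $p(z)=z^d f(z)$ before applying the DFT interpolation formula, whereas the paper works directly with the Laurent coefficients; your packaging has the minor advantage of making transparent why the $j$-sum runs over $\{0,\dots,d_1-1\}$ rather than $\{-d,\dots,d\}$.
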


\begin{proof}
Being a degree-$d$ Laurent polynomial,
$f(z)$ can be expressed as 
\begin{equation}
\label{eq:laurentpoly}
f(z) = \sum_{j=-d}^{d} \beta_j z^j.
\end{equation} 
By choosing the interpolation points
to be $\{z_k\}$, 
we can express the coefficients of $f$ as
$\beta_j = (\sum_{k=0}^{d_1-1}f(z_k)z_k^{-j})/d_1$. 
Substituting this expression for the coefficients in Eq.~\eqref{eq:laurentpoly}
yields Eq.~\eqref{eq:interpolation1}.
\end{proof}

\begin{figure*}
\begin{tikzpicture}
\node at (-0.5,0){\includegraphics[width = 2\columnwidth]{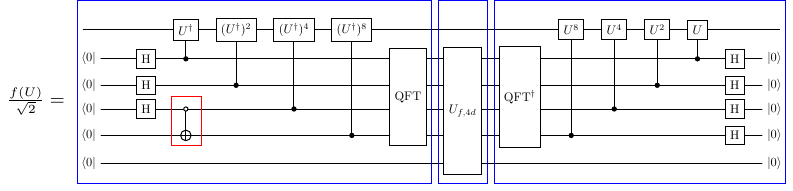}};
\end{tikzpicture}
    \caption{Circuit diagram (in the Kitaev notation) for block encoding 
    of $f(U)$, where $f$ is a degree-$d$ Laurent polynomial (here $d=4$). 
    The circuit acts on the signal register 
    (the top wire) and $3+\lceil \log_2(d)\rceil = 5$ ancilla qubits (all except the top wire). The three parts of the circuit 
    outlined by the three blue boxes can be
    identified as, from right to left, the phase estimation circuit, 
    controlled rotation specified by $f$, and a modified inverse of the phase estimation
    circuit, respectively. The modification refers to the replacement of a Hadamard
    gate by a zero-controlled NOT gate, outlined by the red box. 
    }
    \label{fig:QSP}
\end{figure*}

It is instructive to consider a block encoding 
of $f(U)$ based on this formula.
\begin{lemma}
\label{lem:blockencoding1}
Let $f \in \mathbb{C}[z,z^{-1}]$ be a degree-$d$ Laurent polynomial
and $d_1>2d$ an integer. Let $U$ be an $n$-qubit unitary operator and
\begin{align}
    \mathcal{W}_U &= \sum_{j=0}^{d_1-1}U^j \otimes \ket{j}\bra{j},
    \quad \ket{+_{d_1}} = \frac{1}{\sqrt{d_1}}\sum_{j=0}^{d_1-1}\ket{j}, \nonumber\\
    V_{d_1} &= \sum_{k=0}^{d_1-1}z_k \ket{\varphi_k}\bra{\varphi_k}, \quad    \ket{\varphi_k} = \frac{1}{\sqrt{d_1}}\sum_{j=0}^{d_1-1}z_k^j\ket{j}.
\end{align}
Then
\begin{equation}
\label{eq:circuit1}
        \frac{f(U)}{\sqrt{d_1}}= (U^\dagger)^d\braket{0|[\mathds{1}\otimes 
        f(V_{d_1})V^d_{d_1}] 
        \mathcal{W}_U|+_{d_1}}.
\end{equation}
\end{lemma}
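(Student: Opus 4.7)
The plan is to prove the identity by direct expansion of the partial ancilla matrix element on the right-hand side, followed by an invocation of Lemma~\ref{lem:interpolation1} with the scalar variable $z$ replaced by the operator $U$ via the functional calculus for unitaries.

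First, I would note that $V_{d_1}$ and the $\ket{\varphi_k}$ come predressed in spectral form: the $\ket{\varphi_k}$ are orthonormal discrete Fourier basis vectors and $V_{d_1}\ket{\varphi_k} = z_k\ket{\varphi_k}$. Therefore
\[
f(V_{d_1})V_{d_1}^{d} = \sum_{k=0}^{d_1-1} f(z_k)\,z_k^{d}\,\ket{\varphi_k}\bra{\varphi_k}.
\]
The two overlaps I will repeatedly use are $\braket{0|\varphi_k} = 1/\sqrt{d_1}$ and $\braket{\varphi_k|j} = z_k^{-j}/\sqrt{d_1}$.

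Second, I would unpack the right-hand side. Acting with $\mathcal{W}_U$ on $\ket{\psi}\otimes\ket{+_{d_1}}$ for an arbitrary signal state $\ket{\psi}$ produces $d_1^{-1/2}\sum_j U^j\ket{\psi}\otimes\ket{j}$. Applying $\mathds{1}\otimes f(V_{d_1})V_{d_1}^d$, substituting the spectral sum above, and then projecting the ancilla onto $\bra{0}$ collapses everything to
\[
\braket{0|[\mathds{1}\otimes f(V_{d_1})V_{d_1}^d]\mathcal{W}_U|+_{d_1}} = \frac{1}{d_1^{3/2}}\sum_{j,k=0}^{d_1-1} f(z_k)\,z_k^{d-j}\,U^j,
\]
regarded as an operator on the signal register.

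Third, I would left-multiply by $(U^\dagger)^d = U^{-d}$ and recognize the result as Lemma~\ref{lem:interpolation1} applied to the operator $U$. The lemma supplies the scalar identity $f(z) = z^{-d} d_1^{-1}\sum_{j,k}f(z_k)z_k^{d-j}z^{j}$, and since this is a polynomial identity in $z$, the spectral theorem for the unitary $U$ (whose eigenvalues lie on $S^1$) lifts it verbatim to the operator identity $f(U) = d_1^{-1}\sum_{j,k}f(z_k)z_k^{d-j}U^{j-d}$. Dividing both sides by $\sqrt{d_1}$ and comparing with the previous display yields Eq.~\eqref{eq:circuit1}.

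The computation is essentially routine; the only mild pitfalls are careful bookkeeping of the tensor factors and distinguishing the ancilla computational-basis state $\ket{0}$ from the uniform superposition $\ket{+_{d_1}} = \ket{\varphi_0}$, which the notation could easily conflate. The conceptual content is simply that the circuit in Fig.~\ref{fig:QSP} realizes trigonometric interpolation of $f$ at the $d_1$-th roots of unity, applied to the unitary $U$.
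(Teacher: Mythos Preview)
Your proof is correct and follows essentially the same approach as the paper's: both expand the ancilla matrix element using the spectral form of $f(V_{d_1})V_{d_1}^{d}$ and then invoke Lemma~\ref{lem:interpolation1}. The only cosmetic difference is that the paper first restricts to a single eigenspace of $U$ (replacing $U$ by a scalar $z$ with $\abs{z}=1$) and then extends by linearity, whereas you carry the operator $U$ throughout and apply the functional calculus at the end.
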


\begin{proof}
It suffices to prove that Eq.~\eqref{eq:circuit1} holds in each eigenspace of $U$, i.e.
\begin{equation}
\label{eq:qsp1}
        \frac{f(z)}{\sqrt{d_1}}= z^{-d}\braket{0|f(V_{d_1})V^d_{d_1} \mathcal{W}_z|+_{d_1}}
\end{equation}
where 
$\mathcal{W}_z = \sum_{j=0}^{d_1-1}z^j\ket{j}\bra{j}$ and $\abs{z} = 1$.
The statement of the lemma then follows by linearity.
Eq.~\eqref{eq:qsp1} follows from
\begin{align}
z^{-d}\bra{0}&f(V_{d_1})V^d_{d_1} \mathcal{W}_z\ket{+_{d_1}}\nonumber \\ 
&= 
\frac{z^{-d}}{\sqrt{d_1}}\sum_{j=0}^{d_1-1}\braket{0|f(V_{d_1})V^d_{d_1} z^j |j} \nonumber \\
&= \frac{z^{-d}}{\sqrt{d_1}}\sum_{j=0}^{d_1-1}\sum_{k=0}^{d_1-1}\bra{0}f(z_k)z_k^d \ket{\varphi_k}\bra{\varphi_k}z^j
\ket{j} \nonumber \\
&= \frac{z^{-d}}{d_1\sqrt{d_1}}\sum_{j=0}^{d_1-1}\sum_{k=0}^{d_1-1}f(z_k)z_k^d \left(\frac{z}{z_k}\right)^j 
 \nonumber \\
 &= \frac{f(z)}{\sqrt{d_1}},
\end{align}
where in the last line we used Lemma~\ref{lem:interpolation1}.
\end{proof}
A block encoding for $f(U)$ in terms of $\mathcal{W}_U$ 
can then be obtained easily. 
For simplicity, we set $d_1 = 4d = 2^{m+2}$. Then the block encoding 
of $f(V_{4d})$ can be obtained 
using quantum Fourier transform (QFT)~\cite{NC00},
which follows from 
\begin{align}
    f(V_{4d}) &= \sum_{k'=0}^{4d-1}\ket{\varphi_{k'}}\bra{k'}
    \sum_{l=0}^{4d-1}f(z_l)\ket{l}\bra{l}
    \sum_{k=0}^{4d-1}\ket{k}\bra{\varphi_k} \nonumber\\
     &= 
    \text{QFT}
    \left(\braket{0|U_{f,4d}|0}\right)
    \text{QFT}^\dagger.
\end{align}
As $\mathcal{W}_U$ can be implemented by $4d-1$ queries to 
c-$U$, this block encoding uses ${\bf O}(d)$ 
queries to c-$U$, which is encouraging. 

However, the scaling factor
for this block encoding of $f(U)$ is $\sqrt{4d}$, which is suboptimal. 
This scaling factor can be improved to a constant by 
using a different interpolating polynomial given in the next lemma.
\begin{lemma}
\label{lem:laurentpolyapprox}
Let $f \in \mathbb{C}[z,z^{-1}]$ be a degree-$d$ Laurent polynomial and
$\{z_k = \exp\left(2\pi ik/4d\right), \ k=0,\dots,4d-1\}$. Then 
\begin{equation}
\label{eq:polyapprox}
        f(z) = \frac{1}{8d^2}\sum_{j'=d}^{3d-1}\sum_{j,k=0}^{4d-1} f(z_k)\left(\frac{z}{z_k}\right)^{j-j'}.
\end{equation}
\end{lemma}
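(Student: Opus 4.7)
The plan is to evaluate the right-hand side of Eq.~\eqref{eq:polyapprox} directly by substituting the Laurent expansion $f(z_k)=\sum_{l=-d}^d \beta_l z_k^l$, interchanging the order of summation, and invoking the root-of-unity orthogonality $\frac{1}{4d}\sum_{k=0}^{4d-1} z_k^n = \bigl[n\equiv 0 \pmod{4d}\bigr]$. The $k$-sum then selects exactly those triples $(j,j',l)$ satisfying $l \equiv j-j' \pmod{4d}$, each contributing $\beta_l\,z^{j-j'}$ to the total.

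The delicate step---where the specific parameter ranges earn their keep---is to show that, for $j' \in \{d,\ldots,3d-1\}$ and $l \in \{-d,\ldots,d\}$, the unique $j \in \{0,\ldots,4d-1\}$ solving this congruence in fact satisfies $j-j'=l$ \emph{exactly}, rather than $l \pm 4d$. This reduces to a routine range check: as $j$ varies, $j-j'$ sweeps out the length-$4d$ window $[-j',\,4d-1-j']$, and an elementary inequality argument shows that this window contains $[-d,d]$ precisely when $d \le j' \le 3d-1$. Consequently $z^{j-j'}=z^l$ for every surviving term and no aliasing into higher powers of $z$ occurs.

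Once this is settled, the triple sum collapses to
$$\frac{1}{8d^2}\cdot 4d\cdot\sum_{j'=d}^{3d-1}\sum_{l=-d}^d \beta_l z^l = \frac{1}{2d}\cdot 2d\cdot f(z) = f(z),$$
which is Eq.~\eqref{eq:polyapprox}. I expect no further obstacles: the main content is the aliasing bookkeeping above, and the rest is a mechanical application of orthogonality. A more conceptual alternative is to first establish a shifted version of Lemma~\ref{lem:interpolation1}, namely $f(z)=\frac{1}{4d}\sum_{j,k=0}^{4d-1}f(z_k)(z/z_k)^{j-j'}$ for each $j'\in\{d,\ldots,3d-1\}$ individually, and then recognize Eq.~\eqref{eq:polyapprox} as the average of these $2d$ identities; the two routes carry essentially the same content, with the range check $j'\in[d,3d-1]$ playing the identical role in each.
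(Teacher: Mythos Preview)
Your proposal is correct, and the alternative you sketch at the end---establish the single-$j'$ identity $f(z)=\frac{1}{4d}\sum_{j,k}f(z_k)(z/z_k)^{j-j'}$ for each $j'\in\{d,\ldots,3d-1\}$ and then average over $j'$---is precisely the paper's proof. Your primary route (expand $f(z_k)$, apply root-of-unity orthogonality, then do the aliasing range check) differs only in the order of operations and, as you note yourself, carries the same content; the crucial observation in both is that the window $[-j',4d-1-j']$ contains $[-d,d]$ exactly when $d\le j'\le 3d-1$.
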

\begin{proof}
Observe that for all $j' \in \{d,d+1,\dots,3d-1\}$, we have
\begin{equation}
\label{eq:polyapproxproof}
        f(z) = \frac{1}{4d}\sum_{j,k=0}^{4d-1} f(z_k)\left(\frac{z}{z_k}\right)^{j-j'}.
\end{equation}
We obtain Eq.~\eqref{eq:polyapprox} by averaging 
Eq.~\eqref{eq:polyapproxproof} over $j' \in \{d,d+1,\dots,3d-1\}$.
\end{proof}
As a corollary of Lemma~\ref{lem:laurentpolyapprox}, we have 
\begin{equation}
        f(U) = \frac{1}{8d^2}\sum_{j'=d}^{3d-1}\sum_{j,k=0}^{4d-1} 
        f(z_k)\left(\frac{U}{z_k}\right)^{j-j'},
\end{equation}
which follows by linearity.

Based on this interpolation, we can now construct a block encoding of $f(U)$ with
constant scaling factor.
\begin{lemma}
Let $f \in \mathbb{C}[z,z^{-1}]$ be a degree-$d$ Laurent polynomial and
\begin{equation}
    \ket{+_{2d}} = \frac{1}{\sqrt{2d}}\sum_{j=d}^{3d-1}\ket{j}, 
    \quad \ket{+_{4d}} = \frac{1}{\sqrt{4d}}\sum_{j=0}^{4d-1}\ket{j}.
\end{equation}
Let $U$, $\mathcal{W}_U$ and $V_{4d}$ be as defined in Lemma~\ref{lem:blockencoding1}.
Then
\begin{equation}
    f(U)= \sqrt{2}\braket{+_{2d}|\mathcal{W}_U^\dagger [\mathds{1}\otimes 
    f(V_{4d})]\mathcal{W}_U|+_{4d}}.
\end{equation}
\end{lemma}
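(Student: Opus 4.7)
The plan is to reduce the operator identity to a scalar identity via the spectral decomposition of $U$, and then recognize the resulting expression as the interpolation formula already established in Lemma~\ref{lem:laurentpolyapprox}. This mirrors exactly the strategy used in the proof of Lemma~\ref{lem:blockencoding1}.

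First, by linearity it suffices to verify the claim on each eigenspace of $U$. Fix an eigenvalue $z$ with $\abs{z}=1$ and replace $U$ by the scalar $z$, so that $\mathcal{W}_U$ reduces to $\mathcal{W}_z = \sum_{j=0}^{4d-1} z^j \ket{j}\bra{j}$. The statement to prove collapses to the scalar equation
\begin{equation}
f(z) = \sqrt{2}\,\bra{+_{2d}}\mathcal{W}_z^\dagger\, f(V_{4d})\, \mathcal{W}_z \ket{+_{4d}}.
\end{equation}

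Second, I would expand the right-hand side directly. Acting $\mathcal{W}_z$ on $\ket{+_{4d}}$ and $\mathcal{W}_z^\dagger$ on $\bra{+_{2d}}$ produces weighted superpositions with phases $z^j$ and $z^{-j'}$ respectively. Using $\ket{\varphi_k} = \frac{1}{\sqrt{4d}}\sum_j z_k^j\ket{j}$ from Lemma~\ref{lem:blockencoding1}, the matrix element $\bra{j'}f(V_{4d})\ket{j}$ evaluates to $\frac{1}{4d}\sum_k f(z_k) z_k^{j'-j}$. Combining these gives a triple sum over $j' \in \{d,\dots,3d-1\}$, $j \in \{0,\dots,4d-1\}$ and $k \in \{0,\dots,4d-1\}$ whose summand is precisely $f(z_k)(z/z_k)^{j-j'}$, with an overall prefactor $1/(\sqrt{2d}\cdot\sqrt{4d}\cdot 4d)$.

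Third, I would invoke Lemma~\ref{lem:laurentpolyapprox}, which states exactly that this triple sum equals $8d^2 f(z)$. Substituting and simplifying the normalization $\sqrt{2d}\cdot\sqrt{4d}\cdot 4d = 8d^2\sqrt{2}$ yields $f(z)/\sqrt{2}$ for the matrix element, and multiplication by the external $\sqrt{2}$ recovers $f(z)$, completing the proof. The only real obstacle is careful bookkeeping of the normalization constants, which ultimately conspire with the width $2d$ of the ket $\ket{+_{2d}}$ (rather than $4d$) to produce the improved $\sqrt{2}$ scaling factor; there is no new conceptual ingredient beyond the interpolation identity of Lemma~\ref{lem:laurentpolyapprox}.
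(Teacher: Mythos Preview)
Your proposal is correct and follows essentially the same approach as the paper: reduce to a scalar identity on each eigenspace of $U$, expand the matrix element into the triple sum, and invoke Lemma~\ref{lem:laurentpolyapprox} to identify it with $f(z)$. The paper's proof is terser but uses exactly the same ingredients in the same order, so there is nothing to add.
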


\begin{proof}
A similar calculation to the one in the proof of Lemma~\ref{lem:blockencoding1} yields
\begin{align}
\label{eq:runawaycircuit}
        \sqrt{2}\bra{+_{2d}}&\mathcal{W}_z^\dagger f(V_{4d})\mathcal{W}_z\ket{+_{4d}}\nonumber\\
        &= \frac{1}{8d^2}\sum_{j'=d}^{3d-1}\sum_{j,k=0}^{4d-1} 
        f(z_k)\left(\frac{z}{z_k}\right)^{j-j'}\nonumber\\
        &=f(z). 
\end{align}
By linearity, we have
\begin{align}
\label{eq:runawaycircuit}
        \sqrt{2}\bra{+_{2d}}&\mathcal{W}_U^\dagger f(V_{4d})\mathcal{W}_U\ket{+_{4d}}\nonumber\\
        &= \frac{1}{8d^2}\sum_{j'=d}^{3d-1}\sum_{j,k=0}^{4d-1} 
        f(z_k)\left(\frac{U}{z_k}\right)^{j-j'}\nonumber\\
        &=f(U). 
\end{align}
QED.
\end{proof}

A unitary circuit for block encoding of $f(U)$ is shown in Fig.~\ref{fig:QSP}.
Interestingly, this circuit is remarkably similar to the circuit for function implementation based on 
quantum phase estimation, such as the one used in HHL algorithm
for matrix inversion. The circuit constructed above can be divided into three parts, 
the first part being the standard quantum phase estimation (QPE), 
the second part being the function implementation using $U_{f,4d}$, and
the third part resembling uncomputing QPE with one of the Hadamard gates
replaced by a zero-controlled Not gate. This change of a single gate has a huge impact 
on the performance of the overall circuit! The analysis in Ref.~\cite{HHL09} showed that
in the absence of this small replacement, the error in implementation
scales as $\text{poly}(1/d)$. This is because 
the first part of the circuit, which performs phase estimation,
records phases of $U$ to an accuracy ${\bf O}(1/d)$. 
Yet, our analysis shows that as long as $f$ is a degree-$d$ Laurent polynomial,
the circuit in Fig.~\ref{fig:QSP} block encodes $f(U)$ exactly!

So far, we have constructed an exact block encoding of 
$f(U)$ assuming $f$ to be a degree-$d$ Laurent polynomial.
Our strategy to address the case of general $f$, and thereby to prove 
Theorem~\ref{thm:newQSP}, is to use the exact same
circuit to block encode $f(U)$. Of course, this block encoding is not exact,
and in order to prove Theorem~\ref{thm:newQSP}, we need
to bound the error.

\begin{proof}[Proof of Theorem~\ref{thm:newQSP}]
We first prove the bound on the error. 
The operator exactly block encoded
by our circuit is 
\begin{align}
\label{eq:deffd}
    \sqrt{2}\bra{+_{2d}}&\mathcal{W}_U^\dagger 
f(V_{4d})\mathcal{W}_U\ket{+_{4d}} \nonumber\\
&= \frac{1}{8d^2}\sum_{j'=d}^{3d-1}\sum_{j,k=0}^{4d-1} f(z_k)
        \left(\frac{U}{z_k}\right)^{j-j'}\nonumber\\ 
        &= f_d(U),
\end{align}
where we introduced
\begin{equation}
\label{eq:laurentpolyapprox}
	f_d(z)= \frac{1}{8d^2}\sum_{j'=d}^{3d-1}\sum_{j,k=0}^{4d-1} f(z_k)
        \left(\frac{z}{z_k}\right)^{j-j'}.
\end{equation}
Note that $f_d$ is a degree-$(3d-1)$ Laurent polynomial.
Our task now is to establish an upper bound on $\norm{f(U)-f_d(U)}$.

Using subadditivity of the operator norm, we have
\begin{multline}
\label{eq:triangleinequality}
\norm{f(U)-f_d(U)}\le\\ \norm{f(U)-f_{d,*}(U)} + \norm{f_{d,*}(U)-f_d(U)},
\end{multline}
where $f_{d,*}$ is the best degree-$d$ Laurent polynomial approximation to $f$ as before.
A bound on the first term on the right-hand side is obtained using
\begin{equation}
\norm{f(U)-f_{d,*}(U)} \le \norm{f-f_{d,*}}_\infty = E_d(f).
\end{equation}
To bound $\norm{f_d(U)-f_{d,*}(U)}$, observe that
\begin{equation}
    f_d(z)-f_{d,*}(z)= \sqrt{2}\braket{+_{2d}|\mathcal{W}_z^\dagger\, 
    r_d(V_{4d})\mathcal{W}_z|+_{4d}},
\end{equation}
with $r_d  = f - f_{d,*}$. Since $\mathcal{W}_z$ is unitary for $\abs{z}=1$,
we have $\norm{\mathcal{W}_z}=1$, which leads to 
\begin{align}
    \abs{f_d(z)-f_{d,*}(z)} &\le \sqrt{2}\abs{f(z) - f_{d,*}(z)} \nonumber\\
    \implies \norm{f(U)-f_d(U)} &\le \norm{f_{d,*}-f_d}_{\infty} \le \sqrt{2}E_d(f).
\end{align}
Now by Eq.~\eqref{eq:triangleinequality}, we have 
\begin{equation}
    \norm{f(U)-f_d(U)} \le (1+\sqrt{2})E_d(f),
\end{equation}
as stated in the theorem.

We now count the resources. The circuit in Fig.~\ref{fig:QSP} 
uses $m+3$ ancilla qubits and makes 
$4d-1$ uses of c-$U$ and c-$U^\dagger$ each and one use 
of $U_{f,4d}$. QFT acting on $m+2$ qubits requires 
${\bf O}(m^2)$ additional single- and two-qubit gates~\cite{NC00}. Finally,
generating a description of the circuit using $\text{poly}(m)$ classical
operations is straightforward, which completes the proof.
\end{proof}

It is worth remarking that some simple strategies can be employed to
improve the prefactor for the uses of c-$U$ and c-$U^\dagger$. For instance,
instead of constructing a block encoding of $f(U)$ based on Eq.~\eqref{eq:deffd},
one may use the equivalent expression
\begin{equation}
        f_d(U) =  \sqrt{2}\braket{+_{2d}|(\mathcal{W}''_U)^\dagger 
    f(V_{4d})\mathcal{W}'_U|+_{4d}}
\end{equation}
as the starting point, where 
\begin{equation}
    \mathcal{W}'_U= \sum_{j=-2d}^{2d-1}U^j \otimes \ket{j}\bra{j}, \quad 
    \mathcal{W}''_U = \sum_{j=-d}^{d-1}U^j \otimes \ket{j}\bra{j}.
\end{equation}
Then $\mathcal{W}'_U$ ($\mathcal{W}''_U$) can be implemented with $2d$ ($d$) 
uses of c-$U$ and c-$U^\dagger$ each. As a result, the block encoding of $f(U)$
makes $3d$ uses of c-$U$ and c-$U^\dagger$ each. One can achieve further savings by
using directionally-controlled operation 
$W = \ket{0}\bra{0}\otimes U+ \ket{1}\bra{1}\otimes U^\dagger$, which roughly 
halves the query complexity. It is also possible to choose a smaller value for 
$d_1$ instead of $d_1=4d$ used above, at the cost of a larger scaling
constant for the block encoding of $f(U)$.

To perform QSP according to the circuit constructed above, 
the unitary $U_{f,4d}$ needs to be constructed starting from a description 
of $f$. There are several ways to construct such a unitary~\cite{ZNSD24}. 
One approach~\cite{HHL09}
starts with an efficient reversible classical circuit $O_{f}$ 
for computing $f$ to $b$ bits of precision. A quantum circuit for 
computing $f$ can then be constructed by replacing each classical gate
by its quantum counterpart. Now an approximation $\tilde{U}_{f,4d}$
to $U_{f,4d}$ is obtained by first computing 
$f$ using $O_{f,4d}$ and recording the outcome in the $b$-bit register, 
then applying controlled-rotation by an angle written in this register and
finally uncomputing the $b$-bit register by applying $O_f^\dagger$. In effect,
$\tilde{U}_{f,4d}$ makes one use of $O_f$ and $O_f^\dagger$ each 
and ${\bf O}(b)$ two-qubit gates. The error $\delta = \norm{\tilde{U}_{f,4d}-U_{f,4d}}$ 
depends on the binary representation of complex numbers being used, but for commonly used
representation, this error scales as $\delta \in {\bf O}(2^{-b})$. Finally $U_{f,4d}$
in Theorem~\ref{thm:newQSP} can be replaced by $\tilde{U}_{f,4d}$, leading to a total 
error bounded by $(1+\sqrt{2})E_d(f)+\sqrt{2}\delta$.

Finally, observe that in the process of proving Theorem~\ref{thm:newQSP},
we have discovered an interesting Laurent polynomial approximation
to a given function on the unit circle, as we state in the next theorem.
\begin{theorem}
\label{thm:laurentpolyapprox}
    For any $f:S^1 \to \mathbb{C}$, the polynomial 
    approximation $f_d: S^1 \to \mathbb{C}$ defined in 
    Eq.~\eqref{eq:laurentpolyapprox} satisfies 
\begin{equation}
    \norm{f-f_d}_\infty \le (1+\sqrt{2})E_d(f).
\end{equation}
\end{theorem}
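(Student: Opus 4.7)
The plan is to extract the scalar version of the argument already carried out inside the proof of Theorem~\ref{thm:newQSP}. I would pivot through the best approximant $f_{d,*}$ and apply the triangle inequality to write
\begin{equation*}
\norm{f - f_d}_\infty \le \norm{f - f_{d,*}}_\infty + \norm{f_{d,*} - f_d}_\infty = E_d(f) + \norm{f_{d,*} - f_d}_\infty.
\end{equation*}
The whole task then reduces to showing that the second term is at most $\sqrt{2}\, E_d(f)$.

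For that second term, the key observations are that the map $f \mapsto f_d$ defined by Eq.~\eqref{eq:laurentpolyapprox} is linear in $f$ and depends on $f$ only through its samples at the nodes $z_k$, and that by Lemma~\ref{lem:laurentpolyapprox} this map reproduces every degree-$d$ Laurent polynomial exactly. In particular, plugging $f_{d,*}$ into the formula returns $f_{d,*}$ itself, so setting $r_d := f - f_{d,*}$ and invoking linearity, the defect $f_d - f_{d,*}$ is obtained by substituting $r_d$ for $f$ in Eq.~\eqref{eq:laurentpolyapprox}. To bound this uniformly I would use the operator representation established just above, giving
\begin{equation*}
(f_d - f_{d,*})(z) = \sqrt{2}\,\bra{+_{2d}}\mathcal{W}_z^\dagger\, r_d(V_{4d})\,\mathcal{W}_z\ket{+_{4d}}
\end{equation*}
for every $z \in S^1$. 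Since $\mathcal{W}_z$ is unitary for $\abs{z} = 1$, the states $\ket{+_{2d}}$ and $\ket{+_{4d}}$ are unit vectors, and $V_{4d}$ is diagonalizable in the basis $\{\ket{\varphi_k}\}$ with all eigenvalues on $S^1$, so the functional calculus yields $\norm{r_d(V_{4d})} = \max_k \abs{r_d(z_k)} \le \norm{r_d}_\infty = E_d(f)$. The standard operator-norm inequality $\abs{\braket{\phi|M|\psi}} \le \norm{\phi}\,\norm{M}\,\norm{\psi}$ then gives $\abs{(f_d - f_{d,*})(z)} \le \sqrt{2}\, E_d(f)$ uniformly in $z$, which combines with the first term of the triangle inequality to produce the advertised $(1 + \sqrt{2})\, E_d(f)$.

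There is no substantive obstacle here; this is precisely the scalar-level shadow of the operator-norm reasoning already executed in the proof of Theorem~\ref{thm:newQSP}. The two points worth double-checking are the reproducing property $(f_{d,*})_d = f_{d,*}$, which is immediate from Lemma~\ref{lem:laurentpolyapprox} applied to $f_{d,*} \in \mathbb{C}_d[z,z^{-1}]$, and the spectral bound on $r_d(V_{4d})$, which follows from the explicit diagonalization of $V_{4d}$ and the fact that $r_d$ is continuous on $S^1$. Once both are noted, the $\sqrt{2}$ prefactor propagates directly from the interpolation identity to the approximation bound.
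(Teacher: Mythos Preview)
Your proposal is correct and follows essentially the same route as the paper: the theorem is presented in the paper as an immediate by-product of the proof of Theorem~\ref{thm:newQSP}, using exactly the triangle-inequality split through $f_{d,*}$ together with the identity $f_d(z)-f_{d,*}(z)=\sqrt{2}\,\bra{+_{2d}}\mathcal{W}_z^\dagger\, r_d(V_{4d})\,\mathcal{W}_z\ket{+_{4d}}$ and the operator-norm bound. Your write-up is in fact a little more explicit than the paper's, since you spell out the reproducing property $(f_{d,*})_d=f_{d,*}$ via Lemma~\ref{lem:laurentpolyapprox} and the spectral bound $\norm{r_d(V_{4d})}=\max_k\abs{r_d(z_k)}\le\norm{r_d}_\infty$, whereas the paper compresses this step.
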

\noindent Quite interestingly, the approximation error is bounded by a 
constant times the approximation error of the best Laurent
polynomial approximation. However, we emphasize that we are comparing
the approximation error of a higher degree polynomial with that of the best
approximation error achievable by lower-degree polynomials;
$f_d$ is a degree-$(3d-1)$ Laurent polynomial and achieves an approximation 
error of the order of $E_d(f)$. Nevertheless,
this polynomial approximation could be useful in situations where
one requires error scaling to be provably equal to the error scaling of
the best polynomial approximation.

This concludes our discussion of the new circuit for QSP. In the next section, we explore
its applications to implementing functions of a block-encoded
Hermitian matrix and singular value transformation of an arbitrary block-encoded matrix.

\section{Applications to functions of a Hermitian matrix and singular value transformation}
\label{sec:qevt}

As in the case of conventional QSP, this new circuit for QSP can be 
used for implementing a given function $h:[-1,1]\to \mathbb{C}$ 
of a Hermitian matrix $H$ block-encoded in 
a unitary $U_H$. We first recall a known lemma and include its proof
for completeness~\cite{GSLW19}.
\begin{lemma}
\label{lem:arccos}
Let $U_H$ be an $(n+a)$-qubit self-inverse block encoding of an $n$-qubit 
Hamiltonian $H$, and let $G$ be the Hermitian matrix with spectrum in $[-\pi,\pi)$
satisfying
\begin{equation}
e^{iG} = (2\Pi-\mathds{1})U_H, \quad \Pi = \ket{0^a}\bra{0^a}.
\end{equation}
Then for any function $g:[-1,1]\to \mathbb{C}$, 
    \begin{equation}
        g(H) = \braket{0^a|g(\cos{G})|0^a}.
    \end{equation}
\end{lemma}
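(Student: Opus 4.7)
The plan is to prove this via a qubitization-style block-diagonalization of the walk operator $W := (2\Pi-\mathds{1})U_H = e^{iG}$. Because $U_H$ is self-inverse it is a reflection, and $2\Pi-\mathds{1}$ is manifestly a reflection, so $W$ is a product of two reflections; such products decompose the Hilbert space into invariant subspaces of dimension at most two, on which $W$ acts as a planar rotation. The key claim I aim to establish is that on each such subspace the rotation angle $\theta_k$ satisfies $\cos\theta_k = \lambda_k$, where $\lambda_k$ is the associated eigenvalue of $H$; from this, $\cos G$ sandwiched by $\bra{0^a}\cdot\ket{0^a}$ equals $H$, and the lemma follows by applying $g$.

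Concretely, I would first spectrally decompose $H = \sum_k \lambda_k \ket{\lambda_k}\bra{\lambda_k}$ and set $\ket{\psi_k} := \ket{0^a}\ket{\lambda_k}$. The block-encoding identity $\bra{0^a}U_H\ket{0^a} = H$ lets me write $U_H\ket{\psi_k} = \lambda_k\ket{\psi_k} + \sqrt{1-\lambda_k^2}\,\ket{\psi_k^\perp}$ with $\ket{\psi_k^\perp}$ a unit vector orthogonal to the $\ket{0^a}$ ancilla sector. Using $U_H^2 = \mathds{1}$ to solve for $U_H\ket{\psi_k^\perp}$ pins down the $2\times 2$ block of $U_H$ on $\mathcal{H}_k := \mathrm{span}\{\ket{\psi_k},\ket{\psi_k^\perp}\}$; composing with $2\Pi-\mathds{1}$, which fixes $\ket{\psi_k}$ and negates $\ket{\psi_k^\perp}$, shows that $\mathcal{H}_k$ is $W$-invariant and that $W|_{\mathcal{H}_k}$ is a rotation by an angle $\theta_k$ with $\cos\theta_k = \lambda_k$. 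The spectrum constraint on $G$ then forces $G|_{\mathcal{H}_k}$ to have eigenvalues $\pm\theta_k$, so $\cos(G)|_{\mathcal{H}_k} = \lambda_k\mathds{1}$ and hence $g(\cos G)|_{\mathcal{H}_k} = g(\lambda_k)\mathds{1}$. Taking the matrix element with $\ket{\psi_k}$ on both sides and summing over $k$ using orthogonality of the $\ket{\lambda_k}$ assembles into $\bra{0^a}g(\cos G)\ket{0^a} = \sum_k g(\lambda_k)\ket{\lambda_k}\bra{\lambda_k} = g(H)$.

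The step I expect to require the most care is the edge case $\lambda_k = \pm 1$, where $\sqrt{1-\lambda_k^2}=0$, $\ket{\psi_k^\perp}$ is ill-defined, and $\mathcal{H}_k$ collapses to the one-dimensional $\mathrm{span}\{\ket{\psi_k}\}$; here one checks separately that $W\ket{\psi_k} = \pm\ket{\psi_k}$ forces $G\ket{\psi_k}\in\{0,-\pi\}\ket{\psi_k}$ and therefore $\cos(G)\ket{\psi_k}=\lambda_k\ket{\psi_k}$, preserving the conclusion. A secondary subtlety is that the $\ket{\psi_k^\perp}$ for different $k$ need not be mutually orthogonal, so the subspaces $\mathcal{H}_k$ may overlap outside the $\ket{0^a}$ sector; this is harmless because the final sandwich with $\bra{0^a}$ on both sides projects away all components outside the span of the $\ket{\psi_k}$, which are themselves orthogonal.
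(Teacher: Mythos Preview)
Your proposal is correct and follows essentially the same qubitization argument as the paper: both restrict $(2\Pi-\mathds{1})U_H$ to the two-dimensional span of $\ket{0^a}\ket{\lambda}$ and its $U_H$-image, identify the restriction as a rotation by $\arccos\lambda$, and conclude that $\cos G$ acts as $\lambda\,\mathds{1}$ there. Your treatment is in fact more careful than the paper's, which omits the $\lambda=\pm 1$ edge case and the orthogonality discussion; for the record, the subspaces $\mathcal{H}_k$ are actually mutually orthogonal (use $U_H^2=\mathds{1}$ together with $\Pi U_H\ket{\psi_l}=\lambda_l\ket{\psi_l}$), so your second worry does not arise, though your workaround via the $\bra{0^a}\cdots\ket{0^a}$ sandwich is also valid.
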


\begin{proof}
    It suffices to show that for any eigenstate $\ket{\lambda}$ of $H$, we have 
    $\braket{0^a| g(\cos{G})|0^a}\ket{\lambda} = g(\lambda)\ket{\lambda}$. To prove
    this equality, let
    \begin{equation}
        \ket{\psi} = \ket{0^a}\ket{\lambda},\quad
        \ket{\psi^\perp} = (\mathds{1}-\Pi)U_H\ket{0^a}\ket{\lambda}/\sqrt{1-\lambda^2}.
    \end{equation}
    Then
    \begin{align}
        U_H\ket{\psi} &= \lambda \ket{\psi} + \sqrt{1-\lambda^2} \ket{\psi^\perp},\nonumber\\
        U_H\ket{\psi^\perp} &= \sqrt{1-\lambda^2} \ket{\psi} -  \lambda\ket{\psi^\perp}.
    \end{align}
    Using 
    \begin{align}
        (2\Pi-\mathds{1})\ket{\psi} = \ket{\psi}, \quad
        (2\Pi-\mathds{1})\ket{\psi^\perp} = -\ket{\psi^\perp},
    \end{align}
    we get
    \begin{align}
        (2\Pi-\mathds{1})U_H\ket{\psi} &= \lambda \ket{\psi} 
        - \sqrt{1-\lambda^2} \ket{\psi^\perp},\nonumber\\
        (2\Pi-\mathds{1})U_H\ket{\psi^\perp} &= \sqrt{1-\lambda^2} \ket{\psi} +  \lambda\ket{\psi^\perp}.
    \end{align}
    Therefore, the space $\mathcal{V} = \text{span}\{\ket{\psi},\ket{\psi^\perp}\}$
    is invariant under the action of $(2\Pi-\mathds{1})U_H$, and furthermore, 
    $(2\Pi-\mathds{1})U_H$ acts as 
    \begin{align}
        [(2\Pi-\mathds{1})U_H]_{\mathcal{V}} &= 
        \begin{bmatrix}
            \lambda & -\sqrt{1-\lambda^2}\\
            \sqrt{1-\lambda^2} & \lambda
        \end{bmatrix} \nonumber \\
        &= \exp(-i\arccos(\lambda)\sigma_y)
    \end{align}
    in this basis. Then $G$ acts on $\mathcal{V}$ as
    \begin{equation}
        [G]_{\mathcal{V}} = -\arccos(\lambda)\sigma_y.
    \end{equation}
    Consequntly, 
    \begin{equation}
        g(\cos{[G]})_{\mathcal{V}} = g(\cos(\arccos(\lambda))\mathds{1}) = g(\lambda)\mathds{1},
    \end{equation}
    which proves $g(\cos{[G]})\ket{\psi} = g(\lambda)\ket{\psi}$ as desired.
\end{proof}

The proof of Theorem~\ref{thm:newFHM} is now straightforward.
\begin{proof}[Proof of Theorem~\ref{thm:newFHM}]
Notice that $(2\Pi-\mathds{1})$ can be implemented using ${\bf O}(a)$ two-qubit gates.
Since $f(e^{i\theta}) = g(\cos{\theta})$, we have 
\begin{equation}
f\left((2\Pi-\mathds{1})U_H\right) = f(e^{iG}) = g(\cos{G}),
\end{equation}
which is a block encoding of $g(H)$. 
Using Theorem~\ref{thm:newQSP}, 
a $(\sqrt{2},a+m+3,(1+\sqrt{2})E_d(f))$-block encoding of 
$f\left((2\Pi-\mathds{1})U_H\right)$ 
can be constructed using $U_{f,4d}$. 
The number of queries to $U_H$ and $U_H^\dagger$ as well as the
number of classical operations follow from Theorem~\ref{thm:newQSP}.
\end{proof}

In comparison, the conventional QSP can achieve precision
up to $E_d(f)$ with $d$ queries to $U$, but requires
$\text{poly}(d) = \exp(m)$ classical operations for finding 
the best polynomial approximation and rotation angles to
implement the same. In this sense, our algorithm achieves
nearly optimal queries with exponentially smaller classical cost.

Before moving to the applications of Theorem~\ref{thm:newFHM}, 
let us look more closely at the circuit for FHM. This circuit
constructs a block encoding for an approximation of $f(e^{iG})$, 
namely $f_d(e^{iG})$, 
thereby implementing the approximation  
$g_d(\cos{G}) = f_d(e^{iG})$ to $g(\cos{G})$. 
It is insightful to obtain an explicit expression for $g_d$ 
Using $f(e^{i\theta}) = g(\cos{\theta})$ in Eq.~\eqref{eq:laurentpolyapprox}, we have
\begin{align}
	f_d(z)&= \frac{1}{8d^2}\sum_{j'=d}^{3d-1}\sum_{j,k=0}^{4d-1} 
    f(z_k)
        \left(\frac{z}{z_k}\right)^{j-j'} \nonumber\\
        &= \frac{1}{8d^2}\sum_{j'=d}^{3d-1}\sum_{j,k=0}^{4d-1} 
    g(x_k)
        \left(\frac{z}{z_k}\right)^{j-j'},
\end{align}
where we defined $x_k = \cos{\theta_k}$ and $\theta_k = \arg{z_k}$.
We therefore have
\begin{equation}
\label{eq:1}
	f_d(e^{i\theta})
        = \frac{1}{8d^2}\sum_{j'=d}^{3d-1}\sum_{j,k=0}^{4d-1} 
    g(x_k)\left(\frac{e^{i\theta}}{z_k}\right)^{j-j'}.
\end{equation}
By changing the variables according $j \mapsto 4d-1-j$, $j' \mapsto 4d-1-j'$ and
$k \mapsto 4d-k \mod 4d$, we obtain
\begin{equation}
\label{eq:2}
    f_d(e^{i\theta}) = \frac{1}{8d^2}\sum_{j'=d}^{3d-1}\sum_{j,k=0}^{4d-1}g(x_k)
        \left(\frac{e^{-i\theta}}{z_k}\right)^{j-j'} = f_d(e^{-i\theta}).
\end{equation}
Since we have discovered that $f_d(e^{i\theta}) = f_d(e^{-i\theta})$, we now
define a new function $g_d:[-1,1]\to\mathbb{C}$ by $g_d(\cos{\theta}) 
= f_d(e^{i\theta})$.
By averaging Eqs.~\eqref{eq:1} and \eqref{eq:2}, we obtain
\begin{align}
    g_d(\cos{\theta}) &= \frac{1}{8d^2}\sum_{j'=d}^{3d-1}\sum_{j,k=0}^{4d-1}g(x_k)
        \left(\frac{e^{i(j-j')\theta}+e^{-i(j-j')\theta}}{2z_k^{j-j'}}\right) \nonumber\\
        &= \frac{1}{8d^2}\sum_{j'=d}^{3d-1}\sum_{j,k=0}^{4d-1}g(x_k)
        \frac{\cos((j-j')\theta)}{z_k^{j-j'}}
        \nonumber\\
        &= \frac{1}{8d^2}\sum_{j'=d}^{3d-1}\sum_{j,k=0}^{4d-1}g(x_k)
        \frac{T_{\abs{j-j'}}(\cos{\theta})}{z_k^{j-j'}},
\end{align}
where $T_j$ denotes Chebyshev polynomial of the first kind of order $j$.
We have therefore revealed that our circuit for FHM implements the 
degree-$(3d-1)$ polynomial approximation to $g(x)$ given by
\begin{equation}
\label{eq:polyapprox2}
    g_d(x) = \frac{1}{8d^2}\sum_{j'=d}^{3d-1}\sum_{j,k=0}^{4d-1}g(x_k)
        \frac{T_{\abs{j-j'}}(x)}{z_k^{j-j'}}.
\end{equation}
With further algebra, $g_d(x)$ can also be expressed as
\begin{equation}
        g_d(x)=\sum_{r=0}^{3d-1}\beta_r T_r(x), 
\end{equation}
where 
\begin{equation}
    \beta_r = \left\{ \begin{array}{lcl}
    \frac{1}{4d}\sum_{k=0}^{4d-1}g(x_k) & \text{if} & r=0\\
    \frac{2}{4d}\sum_{k=0}^{4d-1}g(x_k)T_r(x_k) & \text{if} & 1 \le r \le d\\
    \frac{2(3d-r)}{8d^2}\sum_{k=0}^{4d-1}g(x_k)T_r(x_k) & \text{if} & d < r \le 3d-1.
    \end{array}\right.
\end{equation}
Theorem~\ref{thm:newFHM} also bounds the approximation error by 
$(1+\sqrt{2})E_d(f)$, where $f(e^{i\theta}) = g(\cos{\theta})$
or equivalently $f = g \circ \cos \circ \arg$. 
We have therefore obtained a result on polynomial approximation of a function,
which we now state in the form of a theorem.
\begin{theorem}
\label{thm:polyapprox}
    Let $g:[-1,1]\to\mathbb{C}$ be a function and let $g_d:[-1,1]\to\mathbb{C}$ be
    defined by Eq.~\eqref{eq:polyapprox2}. Then for any $x \in [-1,1]$,
    \begin{equation}
\label{eq:polyapproxerror}
    \abs{g(x)-g_d(x)} \le (1+\sqrt{2})E_d(g \circ \cos \circ \arg).
\end{equation}
\end{theorem}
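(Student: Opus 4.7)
The plan is to reduce Theorem~\ref{thm:polyapprox} directly to Theorem~\ref{thm:laurentpolyapprox} by exploiting the identification $g_d(\cos\theta) = f_d(e^{i\theta})$ that was established in the discussion preceding the theorem statement. The key observation is that essentially all of the hard work is already done: Theorem~\ref{thm:laurentpolyapprox} gives a uniform bound on $\abs{f - f_d}$ on the unit circle $S^1$, and we only need to transport this bound from $S^1$ to $[-1,1]$ via the map $\theta \mapsto \cos\theta$.

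The first step is to write $f := g \circ \cos \circ \arg$, so that $f : S^1 \to \mathbb{C}$ satisfies $f(e^{i\theta}) = g(\cos\theta)$ for all $\theta \in \mathbb{R}$. Recalling from the derivation preceding the theorem that $f_d$ (defined by Eq.~\eqref{eq:laurentpolyapprox}) has the symmetry $f_d(e^{i\theta}) = f_d(e^{-i\theta})$, the function $g_d$ given by Eq.~\eqref{eq:polyapprox2} is consistently defined by $g_d(\cos\theta) = f_d(e^{i\theta})$. Together these identities yield
\begin{equation}
g(\cos\theta) - g_d(\cos\theta) = f(e^{i\theta}) - f_d(e^{i\theta}) \quad \text{for all } \theta \in \mathbb{R}.
\end{equation}

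The second step is to invoke the surjectivity of $\cos : [0,\pi] \to [-1,1]$: for any $x \in [-1,1]$, pick $\theta = \arccos x \in [0,\pi]$, so that $x = \cos\theta$ and hence $\abs{g(x) - g_d(x)} = \abs{f(e^{i\theta}) - f_d(e^{i\theta})}$. Taking the supremum over $\theta$ on the right and applying Theorem~\ref{thm:laurentpolyapprox} gives
\begin{equation}
\abs{g(x) - g_d(x)} \le \norm{f - f_d}_\infty \le (1 + \sqrt{2}) E_d(f),
\end{equation}
which is exactly the desired inequality once we substitute $f = g \circ \cos \circ \arg$.

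There is no real obstacle here beyond bookkeeping. The only point that deserves a sentence of care is verifying that the explicit formula for $g_d$ in Eq.~\eqref{eq:polyapprox2} really does coincide with the prescription $g_d(\cos\theta) = f_d(e^{i\theta})$, but this is precisely what was shown in the calculation of Eqs.~\eqref{eq:1} and \eqref{eq:2} and their averaging. Consequently, the proof can be presented in just a few lines, and the theorem should be understood mainly as a restatement of Theorem~\ref{thm:laurentpolyapprox} in the language of ordinary polynomial approximation on $[-1,1]$ rather than Laurent polynomial approximation on $S^1$.
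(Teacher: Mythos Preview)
Your proposal is correct and follows exactly the route the paper itself points to: immediately after stating the theorem, the paper remarks that it ``can also be proved directly using Theorem~\ref{thm:laurentpolyapprox},'' and your reduction via $f = g \circ \cos \circ \arg$ together with the identity $g_d(\cos\theta) = f_d(e^{i\theta})$ is precisely that direct argument. The paper's own derivation reaches the bound through Theorem~\ref{thm:newFHM} instead, but the two routes amount to the same reasoning up to packaging.
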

\noindent Theorem~\ref{thm:polyapprox} can also be proved directly using
Theorem~\ref{thm:laurentpolyapprox}.
The polynomial approximation $g_d$ can be used for performing FHM using 
conventional QSP circuit based on alternating projections. 
Such an approach would guarantee an error scaling
proportional to the best polynomial approximation. Interestingly, 
calculation of $g_d$ does not require any iterative procedure 
like for Remez exchange algorithm. Nevertheless, the calculation of the coefficients of 
$g_d$ requires $\text{poly}(d)$ classical operations, which is not desirable. 
A bigger disadvantage of this approach is that
we need a degree-$(3d-1)$ polynomial $g_d$ to obtain an error of the order
of $E_d(f)$. In practice, a degree-$(3d-1)$ polynomial approximation obtained by
truncating the Chebyshev series of $g$ fares significantly better 
than $g_d$ for most functions.

Returning to the applications of Theorem~\ref{thm:newFHM}, let us consider the task of
Hamiltonian simulation. We are given a blcok encoding $U_H$ of
a Hamiltonian $H$, a time $t \in \mathbb{R}^+$, a precision $\epsilon$,
and the task is to construct a block encoding of $e^{iHt}$. Clearly,
we have $g(x) = e^{itx}$ and therefore $f(\theta) = e^{it\cos{\theta}}$.
In Ref.~\cite{GSLW19}, it was shown that there exists a degree-$d$ trigonometric polynomial
approximating $f(\cos{\theta})$ such that the approximation error $\varepsilon_d(f)$ is bounded
from above by 
\begin{equation}
	\varepsilon_d(f) \le \frac{5}{4}\left(\frac{e\abs{t}}{2d}\right)^d, \quad d \ge \abs{t}-1.
\end{equation}
Since $E_d(f)$ is the approximation error for the best trigonometric polynomial, we have 
$E_d(f) \le \varepsilon_d(f)$. Consequently, our circuit generates a block encoding of $e^{iHt}$
with error 
\begin{equation}
	\epsilon \le \frac{(1+\sqrt{2})5}{4}\left(\frac{e\abs{t}}{2d}\right)^d
\end{equation}
using ${\bf O}(d)$ queries to $U_H$ and $U_H^\dagger$. It is easy to verify following the steps
in Ref.~\cite{GSLW19} that this error scaling leads to an optimal algorithm for Hamiltonian simulation.
Similar arguments could be extended for implementation of
other functions of Hermitian matrices.

Theorem~\ref{thm:newFHM} provides an error bound $\epsilon = (1+\sqrt{2})E_d(f)$
for a given $d$. This result can be bolstered with a general, tight upper bound
on $E_d(f)$ that depends on some properties of $f$. Fortunately, such tight bounds
on $E_d(f)$ are known in the form of Jackson's inequalities~\cite{Ste05} and their
generalizations for a large class of functions. These inequalities bound $E_d(f)$
in terms of some properties of the function $\tilde{f}:\mathbb{R} \to \mathbb{C}$
defined by $\tilde{f}(\theta) = f(e^{i\theta}) = g(\cos{\theta})$.
For instance, if $\tilde{f}$ is $r$ times differentiable, 
then Jackson's inequalities bound $E_d(f)$ in terms of the modulus of
continuity of the $r$th derivative of $\tilde{f}$. 
\begin{definition}
For a function $\tilde{f}: \mathbb{R} \to \mathbb{C}$ and a $\Delta \in \mathbb{R}^+$, 
the modulus of continuity $\omega(\Delta,\tilde{f})$ is defined to be
\begin{equation}
    	\omega(\Delta,\tilde{f}) = \sup_{\abs{x_1-x_2} \le \Delta}
        \left\{\abs{\tilde{f}(x_1)-\tilde{f}(x_2)} \right\}.
\end{equation}
\end{definition}
\noindent Jackson's inequality gives us the following bound on the best approximation error $E_d(f)$.
\begin{lemma}
\label{lem:jackson}
    Let $f:S^1 \to \mathbb{C}$ be a function such that $\tilde{f}(\theta) = f(e^{i\theta})$
    is $r$ times differentiable. Then 
    \begin{equation}
        E_d(f) \le \frac{C(r)\omega(1/d,\tilde{f}^{(r)})}{d^r}, 
    \end{equation}
    where
    \begin{equation}
        C(r) = \frac{4}{\pi}\sum_{k=0}^{\infty}\frac{(-1)^{k(r+1)}}{(2k+1)^{r+1}}
    \end{equation}
    is a constant independent of $f$ and $d$.   
\end{lemma}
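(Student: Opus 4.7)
The plan is to reduce the claim to the classical Jackson-type theorem for trigonometric approximation of the $2\pi$-periodic function $\tilde{f}(\theta) = f(e^{i\theta})$. As already observed in the excerpt, $E_d(f)$ equals the best trigonometric polynomial approximation error for $\tilde{f}$ on $\mathbb{R}$, so it suffices to construct an explicit trigonometric polynomial $T_d$ of degree at most $d$ satisfying $\|\tilde{f} - T_d\|_\infty \le C(r) d^{-r}\, \omega(1/d,\tilde{f}^{(r)})$.

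For the base case $r=0$, I would take $T_d = J_d * \tilde{f}$, where $J_d$ is the Jackson kernel, a non-negative trigonometric polynomial of degree $\le d$, normalized so that $\int J_d = 1$, whose mass is concentrated near $\theta = 0$ and whose second moment satisfies $\int \theta^2 J_d(\theta)\, d\theta = O(1/d^2)$. Since $J_d$ is non-negative, the pointwise error is bounded by $\int J_d(\theta)\lvert\tilde{f}(x-\theta)-\tilde{f}(x)\rvert\, d\theta$; the elementary subadditivity estimate $\lvert\tilde{f}(x-\theta)-\tilde{f}(x)\rvert \le (1+d\lvert\theta\rvert)\omega(1/d,\tilde{f})$ combined with the second-moment bound and the Cauchy--Schwarz inequality immediately gives $\|\tilde{f}-T_d\|_\infty = O(\omega(1/d,\tilde{f}))$, which is the $r=0$ case.

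For $r \ge 1$, the natural route is Favard's sharp version of Jackson's theorem. First apply the $r=0$ result to $\tilde{f}^{(r)}$ to obtain a trigonometric polynomial $P$ of degree $\le d$ with $\|\tilde{f}^{(r)}-P\|_\infty = O(\omega(1/d,\tilde{f}^{(r)}))$. Next, construct a trigonometric polynomial $Q$ of degree $\le d$ whose $r$-th derivative is $P$, using the Bernoulli / Favard kernel as an explicit $r$-fold antiderivative; then $\tilde{f}-Q$ has $r$-th derivative $\tilde{f}^{(r)}-P$, and a convolution identity against the Bernoulli kernel yields $\|\tilde{f}-Q\|_\infty \le \|B_r\|_{L^1}\,\|\tilde{f}^{(r)}-P\|_\infty$. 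The Bernoulli kernel's $L^1$ norm on the relevant frequency band is $O(d^{-r})$, and its sharp value is obtained from the Fourier series $\sum_{k=1}^\infty \sin(k\theta)/k^r$ (or the cosine variant), whose extremal evaluation on $[-\pi,\pi]$ produces exactly the Favard constant $\tfrac{4}{\pi}\sum_{k=0}^\infty (-1)^{k(r+1)}/(2k+1)^{r+1}$.

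The main obstacle will be securing the exact constant $C(r)$ rather than a crude multiple of it. A naive iteration of the $r=0$ bound loses powers of $d$ and gives a worse $r$-dependence, so one must invoke the sharp extremal property of the Bernoulli kernel, equivalently Krein's theorem or the standard duality argument identifying the optimal $L^\infty$ constant in the Favard problem. The remaining bookkeeping, namely checking that the constructed $T_d$ has degree at most $d$ (not a constant multiple thereof) and that the modulus of continuity obeys the scaling $\omega(\lambda\Delta,g) \le (1+\lambda)\omega(\Delta,g)$ used in the Jackson-kernel estimate, is routine.
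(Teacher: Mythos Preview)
The paper does not supply a proof of this lemma at all: it is stated as a classical result from trigonometric approximation theory and attributed to the literature (the reference~\cite{Ste05}). There is therefore no ``paper's own proof'' to compare against.

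Your outline is essentially the standard textbook route to Jackson--Favard, and the overall architecture (Jackson kernel for the base case, Bernoulli/Favard kernel plus duality for the sharp constants at higher $r$) is correct. One point worth flagging: the naive Jackson-kernel convolution you describe for $r=0$ does \emph{not} deliver the sharp constant $C(0)=1$; it produces a constant of order $3$ or so. Likewise, chaining the $r=0$ estimate with the Bernoulli-kernel step as you wrote it would compound constants and miss the exact $C(r)$. Obtaining the precise Akhiezer--Krein--Favard constants requires the full extremal/duality argument (identifying the worst-case $\tilde f$ as a Euler spline and computing the norm of the associated operator), which you allude to but do not spell out. Since the paper is merely quoting the result, this level of detail is not expected here, but if you intend your write-up to stand as an actual proof rather than a pointer to the literature, that is where the real work lies.
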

\noindent The constants $\{C(r)\}$ are known as Akhiezer-Krein-Favard constants. For $r=0$ and $r=1$, we have
$C(0)=1$ and $C(1)=\pi/2$ respectively.

As a specific application, we now derive the complexity of implementing
$\abs{H}^c$ for $c \in \mathbb{R}^+$. It suffices to derive a bound on $E_d(f)$
for the corresponding function $f$.
\begin{theorem}
    Let $c$ be a non-integer positive constant, i.e. $c \in \mathbb{R}^+$, $c \notin \mathbb{Z}^+$. 
    Let $g(x) = |x|^c$ and $f(e^{i\theta}) = g(\cos{\theta}) = \abs{\cos{\theta}}^c$.
    Then 
    \begin{equation}
        E_d(f) \in  {\bf O}\left(\frac{1}{d^c}\right).
    \end{equation}
\end{theorem}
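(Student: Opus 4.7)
The plan is to apply Jackson's inequality (Lemma~\ref{lem:jackson}) to $\tilde{f}(\theta) = f(e^{i\theta}) = |\cos\theta|^c$. Write $c = r + \alpha$ where $r = \lfloor c \rfloor \in \mathbb{Z}^+ \cup \{0\}$ and $\alpha \in (0,1)$; the assumption $c \notin \mathbb{Z}^+$ guarantees $\alpha > 0$. The goal is to show that $\tilde{f} \in C^r(\mathbb{R})$ with $\tilde{f}^{(r)}$ Hölder continuous of exponent $\alpha$, so that $\omega(1/d,\tilde{f}^{(r)}) \in {\bf O}(d^{-\alpha})$ and Jackson's inequality gives $E_d(f) \le C(r)\,\omega(1/d,\tilde{f}^{(r)})/d^r \in {\bf O}(d^{-c})$.

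First I would isolate where $\tilde{f}$ could fail to be smooth: since $|\cos\theta|^c = (\cos^2\theta)^{c/2}$ with $\cos^2\theta$ real-analytic and bounded away from zero outside neighborhoods of $\theta_k := \pi/2 + k\pi$, the function $\tilde{f}$ is $C^\infty$ away from $\{\theta_k\}$. All regularity issues are therefore local to these zeros. By periodicity and the symmetry $\cos(\pi-\theta) = -\cos\theta$, it suffices to analyze $\tilde{f}$ near $\theta_0 = \pi/2$. Setting $u(\theta) = -\sin(\theta-\theta_0) = \cos\theta$, the function is $|u(\theta)|^c$ with $u$ smooth, $u(\theta_0)=0$, and $u'(\theta_0)\ne 0$.

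Next I would bound derivatives of $|u(\theta)|^c$ using the Faà di Bruno formula: for $1 \le k \le r$, the $k$th derivative is a finite linear combination of terms of the form $|u|^{c-k}\operatorname{sgn}(u)^{\epsilon}\prod_j (u^{(m_j)})^{n_j}$ with $\sum_j m_j n_j = k$. Because $u$ and its derivatives are smooth and bounded, and $c - k \ge c - r = \alpha > 0$ for $k \le r$, every such term is continuous at $\theta_0$ and vanishes there. Hence $\tilde{f}^{(k)}$ extends continuously to $\mathbb{R}$ for $k \le r$, giving $\tilde{f} \in C^r(\mathbb{R})$. For the $r$th derivative the dominant term near $\theta_0$ is proportional to $|u(\theta)|^{\alpha}\operatorname{sgn}(u)^{\epsilon}$ times smooth factors; since $t \mapsto |t|^\alpha$ (and the sign-weighted variant $t \mapsto |t|^\alpha \operatorname{sgn}(t)^{\epsilon}$) is $\alpha$-Hölder on $\mathbb{R}$, and $u$ is smooth, a standard composition argument yields that $\tilde{f}^{(r)}$ is $\alpha$-Hölder in a neighborhood of $\theta_0$. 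Away from the zeros, $\tilde{f}^{(r)}$ is $C^\infty$ and hence Lipschitz (thus $\alpha$-Hölder) on compact subsets; by periodicity the global $\alpha$-Hölder constant $L$ is finite.

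The main obstacle is the careful accounting in the Faà di Bruno expansion near the zeros: verifying that the sign factors $\operatorname{sgn}(u)^{\epsilon}$ do not worsen the Hölder exponent and that no term with exponent less than $\alpha$ appears. Once the Hölder bound $\omega(\Delta,\tilde{f}^{(r)}) \le L\Delta^\alpha$ is established, Lemma~\ref{lem:jackson} immediately gives
\begin{equation}
E_d(f) \;\le\; \frac{C(r)\,\omega(1/d,\tilde{f}^{(r)})}{d^r} \;\le\; \frac{C(r)L}{d^{r+\alpha}} \;=\; \frac{C(r)L}{d^c},
\end{equation}
which is exactly the claimed ${\bf O}(1/d^c)$ bound.
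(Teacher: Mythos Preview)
Your proposal is correct and follows essentially the same route as the paper: both apply Jackson's inequality (Lemma~\ref{lem:jackson}) with $r=\lfloor c\rfloor$ and show that $\omega(1/d,\tilde f^{(r)})\in{\bf O}(d^{-(c-\lfloor c\rfloor)})$, which combined with the $d^{-r}$ factor gives the claimed ${\bf O}(d^{-c})$ bound. The only difference is bookkeeping in the derivative computation---the paper writes explicit closed-form expansions of $\tilde f^{(\lfloor c\rfloor)}$ in separate cases according to the parity of $\lfloor c\rfloor$ and isolates the single term $|\cos\theta|^{c-\lfloor c\rfloor}$, whereas you arrive at the same dominant term $|u|^{c-r}\operatorname{sgn}(u)^{\epsilon}$ via the Fa\`a di Bruno formula.
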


\begin{proof}
Note that $f$ is $\lfloor c \rfloor$ times differentiable for all $\theta \in \mathbb{R}$,
whereas its $(\lfloor c \rfloor+1)$th derivative diverges at the values of $\theta$ for which $\cos(\theta) = 0$. 
To apply Jackson's inequality in Lemma~\ref{lem:jackson}, we need
to bound $\omega(1/d, f^{\lfloor c \rfloor})$. Suppose first that $\lfloor c \rfloor$ is an even integer.
Then we can express $f^{\lfloor c \rfloor}(\theta)$ as 
\begin{equation}
f^{\lfloor c \rfloor}(\theta) = 
\sum_{j = 0}^{\lfloor c \rfloor/2} \alpha_j(c) \abs{\cos{\theta}}^{c-2j}
\end{equation} 
for some constants $\alpha_j(c) \in \mathbb{R}$. Then 
\begin{multline}
\omega(1/d, f^{\lfloor c \rfloor}) \le \omega(1/d, \sum_{j = 0}^{\lfloor c \rfloor/2-1} \alpha_j(c) 
\abs{\cos{\theta}}^{c-2j})
+ \\
\omega(1/d, \alpha_{\lfloor c \rfloor/2}(c) \abs{\cos{\theta}}^{c-\lfloor c \rfloor}).
\end{multline}
Out of the two terms on the right-hand-side, the first is the modulus of continuity
of a function with derivative of bounded from above, say by $K_{c,1}$. Therefore
\begin{align}
\omega(1/d, \sum_{j = 0}^{\lfloor c \rfloor/2-1} \alpha_j(c) 
\abs{\cos{\theta}}^{c-2j}) &\le K_{c}/d \nonumber\\
&\le K_{c}/d^{c-\lfloor c \rfloor}.
\end{align}
Now let us consider the second term. The derivative of $\abs{\cos{\theta}}^{c-\lfloor c \rfloor}$
diverges at the values of $\theta$ 
for which $\cos(\theta) = 0$. To bound the modulus of continuity, we use
\begin{align}
    \omega(1/d, \alpha_{\lfloor c \rfloor/2}(c) \abs{\cos{\theta}}^{c-\lfloor c \rfloor}) 
  &= \alpha_{\lfloor c \rfloor/2}(c)\abs{\sin{\frac{1}{d}}}^{c-\lfloor c \rfloor} \nonumber\\
  &\le \frac{\alpha_{\lfloor c \rfloor/2}(c)}{d^{c-\lfloor c \rfloor}}.
\end{align}
Putting together the two terms, we get
\begin{equation}
\omega(1/d, f^{\lfloor c \rfloor}) \le \frac{K'_c}{d^{c-\lfloor c \rfloor}}, \quad 
K'_c = K_c + \alpha_{\lfloor c \rfloor/2}(c).
\end{equation}
For the case $\lfloor c \rfloor$ odd, we can use the expansion
\begin{equation}
f^{\lfloor c \rfloor}(\theta) = 
\sum_{j = 0}^{(\lfloor c \rfloor-1)/2} \alpha_j(c) \abs{\cos{\theta}}^{c-2j+1}\text{sign}(\cos{\theta})\sin{\theta},
\end{equation} 
to show that
\begin{equation}
\omega(1/d, f^{\lfloor c \rfloor}) \le \frac{K''_c}{d^{c-\lfloor c \rfloor}}
\end{equation}
for a $c$-dependent constant $K''_c$.
Now using Lemma~\ref{lem:jackson}, we obtain
\begin{equation}
E_d(f) \le \frac{\omega(1/d, f^{\lfloor c \rfloor})}{d^{\lfloor c \rfloor}} \le \frac{K}{d^c},
\end{equation}
where $K = K'_c$ for $\lfloor c \rfloor$ even and $K = K''_c$ for $\lfloor c \rfloor$ odd. QED.
\end{proof}

Observe that $E_d(f)$ for the functions considered above does not decay exponentially in $d$. 
This is not a limitation of our approach, rather an unavoidable consequence of the fact
that $g(\theta) = \abs{\cos{\theta}}^c$ is not analytic at $\theta = \pm \pi/2$ when $c$ is not an integer. 
The query complexity for implementing 
$\abs{H}^c$ using our approach is optimal up to the prefactor
and scales as ${\bf O}(1/\epsilon^{1/c})$. As a corollary, a block encoding
of $\sqrt{\abs{H}}$ can be constructed using ${\bf O}(1/\epsilon^2)$ queries to $U_H$ and $U_H^\dagger$.

Stronger bounds on the scaling of $E_d(f)$ are known for classes of functions 
with a higher degree of smoothness~\cite{Ste05}. 
Such classes include infinitely differentiable functions,
functions analytic in a neighborhood of the real line and for functions
that are analytic everywhere, i.e. entire functions. 
These bounds are stated in terms of $(\psi,\beta)$-derivative framework~\cite{Ste86}
used for classification of smooth functions.
Qualitatively, these results show that $E_d(f)$ decays exponentially in $d$
if $\tilde{f}(\theta) = f(e^{i\theta})$ is analytic in a neighborhood of the real line. 
Moreover, if $\tilde{f}(\theta)$ is entire, then 
$E_d(f)$ scales super-exponentially with $d$, i.e. for any $\gamma>0$, there exists
a $d_0 \in \mathbb{Z}^+$ such that $E_d(f) \le \exp(-\gamma d)$ for $d>d_0$~\cite{SSS08,Ste05}.
Interestingly, analyticity of $g$ implies that of $\tilde{f} = g(\cos{\theta})$. Therefore, 
our circuit for implementing $g(H)$ achieves super-exponential accuracy
for any entire function $g$. Such results were previously proved in the literature 
for some specific functions on a case-by-case basis~\cite{GL17,GSLW19,AGGW20}. The examples
include $g(x) = e^{ixt}$ appearing in Hamiltonian simulation~\cite{AT03,BCK15},
$g(x) = (1-e^{ixt})/x$ appearing in differential-equation solvers~\cite{Ber14,Kro23} and
$g(x)=e^{-\beta x}$ appearing in Gibbs sampling~\cite{PW09,HMSS+22} and SDP solvers~\cite{AGGW20}.

Theorem~\ref{thm:newFHM} can be further used for
performing quantum singular value transformation (QSVT) using known techniques. 
In QSVT, we are given a block encoding of an arbitrary matrix $A$ with
singular value decomposition $A=V_1^\dagger S V_2$ and a function
$g$ of definite parity, and the goal is to construct a block encoding of $g^{\text{SV}}(A)$, 
where $g^{\text{SV}}(A) = V_1^\dagger g(S) V_2$ if $g$ is an odd function and
$g^{\text{SV}}(A) = V_2^\dagger g(S) V_2$ if $g$ is an even function.
To perform QSVT, we first encode $A$ in the off-digonal block of a Hermitian matrix 
\begin{equation}
H_A = \begin{bmatrix}
0 & A^\dagger \\ A & 0
\end{bmatrix}.
\end{equation}
A block encoding of $H_A$ can be constructed by making one use of the block encoding of
$A$ and its inverse respectively. Then we construct a block encoding of 
$g(H_A)$ to desired accuracy as described in Theroem~\ref{thm:newFHM},
We then invoke the fact that~\cite{GSLW19}
\begin{equation}
    g^{\rm SV}(A) = \left\{\begin{array}{lcl}
    \braket{0|g(H_A)|0} & \text{if} & p=0\\
    \braket{1|g(H_A)|0} & \text{if} & p=1 
    \end{array}\right. \quad .
\end{equation}
Therefore, our new circuit for QSP enables us to achieve
nearly optimal dependence on precision for
both FHM and QSVT for arbitrary $g$, 
without expensive or unstable classical preprocesing.

\section{Conclusion}
\label{sec:conclusion}
In conclusion, we have constructed a quantum circuit for QSP based on interpolation
by Laurent polynomials. This circuit overcomes the bottleneck of the existing approaches, namely 
expensive and unstable classical preprocessing. 
The circuit is quite simple and universal with only one tunable parameter $d$, 
which determines the precision. Our circuit for QSP enables achieving nearly optimal scaling 
for implementing functions of block-encoded Hermitian matrices
and for performing singular value transformation of arbitrary block-encoded matrices 
with circuits that are easy to construct classically.
It is particularly valuable when one is interested in implementing black-box functions,
as only a constant number of queries to the function oracle are required.

Interestingly, our circuit is very similar to
the phase-estimation based circuit for function implementation, and 
therefore uncovers a new connection between quantum phase estimation and
QSP. Prior to this work, phase estimation-based approaches for quantum signal processing and
for implementing functions of Hermitian matrices were considered ineffective,
as they were assumed to lead to polynomial precision,
in contrast to exponential precision offered by alternating projection 
methods~\cite{LC17,GSLW19}. 
We show that an exponential precision can be achieved by phase estimation-based
approaches by replacing only one Hadamard gate in the circuit by a two-qubit gate, 
as explained in Sec.~\ref{sec:qsp}.

In the future, we shall investigate whether our techniques can be applied  
for quantum eigenvalue transformation for non-normal
matrices~\cite{LS24,ACLY24}. Building on the insights from this work, it would be interesting
to see whether the techniques for phase estimation for non-normal
matrices~\cite{Sha22,LS24,AK24} prove useful for this task.

\begin{acknowledgments}
This research was supported by the 
Australian Research
Council Centre of Excellence for Engineered Quantum Systems (CE170100009). 
I am grateful for insightful conversations with Salini Karuvade, 
Gopikrishnan Muraleedharan and Alessandro Luongo.
\end{acknowledgments}

\bibliography{Ref}

\end{document}